\documentclass[12pt,a4paper]{article}

\usepackage{amsmath,amsthm,amssymb,latexsym,amsfonts,mathabx,amscd,bbm,a4wide}

\usepackage{dsfont}
\usepackage{mathrsfs}
\usepackage{setspace}
\usepackage{hyperref}
\usepackage{color}
\usepackage{mathtools}
\usepackage{comment}
\usepackage{changes}

\allowdisplaybreaks
\newcommand{\ud}{\mathrm{d}}

\numberwithin{equation}{section}

\newtheorem{theorem}{Theorem}[section]
\newtheorem{lemma}[theorem]{Lemma}

\newtheorem{cor}[theorem]{Corollary}
\newtheorem{remark}[theorem]{Remark}
\newtheorem{remarks}[theorem]{Remarks}

\theoremstyle{definition}
\newtheorem{defn}[theorem]{Definition}

\newcommand{\E}{\mathds{E}}












\numberwithin{equation}{section}

\begin{document}

\thispagestyle{empty}

\vspace*{1cm}

\begin{center}
	
	{\Large \bf  On Bose--Einstein condensation in the Luttinger--Sy model 
	with finite interaction strength} \\

	\vspace*{2cm}
	
	{\large  Joachim~Kerner \footnote{E-mail address: {\tt joachim.kerner@fernuni-hagen.de}}, Maximilian~Pechmann \footnote{E-mail address: {\tt maximilian.pechmann@fernuni-hagen.de}}, and Wolfgang~Spitzer \footnote{E-mail address: {\tt wolfgang.spitzer@fernuni-hagen.de}}}%
	
	\vspace*{5mm}
	
	Department of Mathematics and Computer Science\\
	FernUniversit\"{a}t in Hagen\\
	58084 Hagen\\
	Germany\\
	
\end{center}

\vfill

\begin{abstract} We study Bose--Einstein condensation (BEC) in the Luttinger--Sy model. Here, Bose point particles in one spatial dimension do not interact with each other, but, through a positive (repulsive) point potential with impurities which are randomly located along the real line according to the points of a Poisson process. Our emphasis is on the case in which the interaction strength is not infinite. As a main result, we prove that in thermal equilibrium the one-particle ground state is macroscopically occupied, provided that the particle density is larger than a critical one depending on the temperature. 
\end{abstract}

\newpage

%
\section{Introduction}
In this paper, we study Bose--Einstein condensation (BEC)~\cite{PaperEinstein1,PaperEinstein2} in the Luttinger--Sy (LS) model with finite interaction strength~\cite{luttinger1973low,luttinger1973bose}. Conventional BEC refers to a macroscopic occupation of a one-particle ground state. If the particles do not interact with each other (ideal Bose gas), this one-particle state is the ground state of the one-particle Hamiltonian~\cite{PO56}. In contrast to this, generalized BEC requires a macroscopic occupation of an arbitrarily narrow energy interval of one-particle states~\cite{casimir1968bose, van1982generalized, van1983condensation, van1986general, van1986general2,ZagBru,girardeau1960relationship, jaeck2010nature}. More specifically, generalized BEC is then classified into three different types: Type-I or type-II BEC is present whenever finitely or infinitely many one-particle states in this narrow energy interval are macroscopically occupied; generalized BEC without any one-particle state in this energy interval being macroscopically occupied is defined as type-III.

In an ideal Bose gas, to prove generalized BEC (which is done by showing that some critical density is finite) is much simpler than proving macroscopic occupation of the one-particle ground state since for this one has to estimate the gaps between consecutive eigenvalues of the one-particle Hamiltonian~\cite{LenobleZagrebnovLuttingerSy, zagrebnov2007bose, jaeck2010nature}. Note here that external potentials may lead to a generalized BEC by altering the density of states \cite{lenoble2004bose}. However, only very limited results regarding the type of BEC in random external potentials are available~\cite{jaeck2010nature}. Actually, to the best of our knowledge, the type of BEC has so far been determined rigorously for the LS model with infinite interaction strength only, where it is of type-I~\cite{LenobleZagrebnovLuttingerSy, zagrebnov2007bose}. 

The LS model is a one-dimensional, continuous model with a random Hamiltonian which is the sum of kinetic energy (described by the Laplacian) and a random point potential. Here, a Poisson process on the real line generates a sequence of points. At all these points we attach a $\delta$-function of mass $\gamma>0$. This $\gamma$ has the meaning of the interaction strength. Formally setting $\gamma=\infty$ one arrives at the LS model with infinite interaction strength, which has been investigated in~\cite{luttinger1973low,luttinger1973bose,LenobleZagrebnovLuttingerSy, zagrebnov2007bose}, see also~\cite{KPS18}.

%
%

The paper is organized as follows: In Section~\ref{SecPrelim} we formulate the model in terms of its Hamiltonian which itself is obtained from an associated quadratic form. We recall well-known facts regarding the (limiting) integrated density of states, define (generalized) BEC and state the known theorem on the existence of generalized BEC. In Section~\ref{secMainResults} we then present our main result, Theorem~\ref{LSM ns main theorem}, which leads to a proof of almost sure macroscopic occupation of the one-particle ground state, see Theorem~\ref{MacroscopicOccupationGroundstate}.

Auxiliary results are summarized in the appendix, to which we refer throughout the manuscript.

%
\section{Preliminaries}\label{SecPrelim}
In order to introduce the Luttinger--Sy model with finite (repulsive) interaction (referred to as LS model in the rest of the paper) one starts with a Poisson point process $X$ on $\mathds{R}$ with intensity $\nu > 0$ on some probability space $(\Omega,\mathcal{F},\mathds{P})$. More explicitly, $\mathds{P}$-almost surely $X(\omega)=\{x_j(\omega): j \in \mathbb{Z}\}$ is a strictly increasing sequence of points (also called atoms) $x_j=x_j(\omega) \in \mathds{R}$ and zero is contained in the interval $(x_0(\omega),x_1(\omega))$. The probability that a given bounded Borel subset $\Lambda \subset \mathds{R}$ contains exactly $m$ points is
\begin{align}
\mathds{P}\left(\omega:\left|X(\omega)\cap \Lambda)\right|=m\right)=\frac{(\nu|\Lambda|)^m}{m!}\mathrm{e}^{-\nu |\Lambda|}\ , \quad m \in \mathds{N}_0\ .
\end{align}
Furthermore, for two disjoint Borel sets $\Lambda_1$ and $\Lambda_2$, the events $\{\omega:\left|X(\omega)\cap \Lambda_1)\right|=m_1\}$ and $\{\omega:\left|X(\omega)\cap \Lambda_2)\right|=m_2\}$ are stochastically independent. Note that, depending on the context, $|\cdot|$ refers to the Lebesgue measure or to the counting measure.

In a next step one places at each atom $x_j(\omega)$ a $\delta$-distribution of mass $\gamma>0$, also called the interaction strength. The underlying one-particle Hamiltonian is informally given by 
\begin{align}\label{FullHamiltonianRealLine}\begin{split}
\mathrm{h}_{\gamma}(\omega)&:=-\frac{\ud^2}{\ud x^2}+V_{\gamma}(\omega,\cdot)\ ,
\end{split}
\end{align}
with (external) potential
\begin{align}
V_{\gamma}(\omega,\cdot):=\gamma \sum_{j \in \mathds Z}\delta(\cdot-x_j(\omega))\ .
\end{align}
Note that a rigorous definition of~\eqref{FullHamiltonianRealLine} can be obtained via the construction of a suitable quadratic form on the Hilbert space $L^2(\mathds{R})$. However, since we are interested in studying BEC, we have to employ a thermodynamic limit which, in particular, means that we have to restrict the one-particle configuration space from $\mathds{R}$ to the bounded interval $\Lambda_N:=(-L_N/2,L_N/2)$ with $L_N:=N/\rho$; here $\rho > 0$ denotes the particle density and $N \in \mathds{N}$ is a scaling parameter (the particle number) which eventually will go to infinity. Consequently, we introduce $\mathrm{h}^{N}_{\gamma}(\omega)$ as the finite-volume version of $\mathrm{h}_{\gamma}(\omega)$, defined on the Hilbert space $L^2(\Lambda_N)$. More explicitly, for all $N$ and $\mathds{P}$-almost all $\omega \in \Omega$, one defines the (quadratic) form
\begin{align}
q^{N}_{\gamma}(\omega)[\varphi]:=\int\limits_{\Lambda_N}|\varphi^{\prime}(x)|^2\ \ud x + \gamma \sum_{j: x_j(\omega) \in \Lambda_N}|\varphi(x_j(\omega))|^2
\end{align}
on $L^2(\Lambda_N)$ with (Dirichlet) form domain $H^1_0(\Lambda_N) :=\{\varphi \in L^2(\Lambda_N): \varphi^{\prime} \in L^2(\Lambda_N), \varphi(-L_N/2)$ $ = \varphi(+L_N/2)=0  \}$. This form is positive, densely defined and closed. Hence, due to the representation theorem for quadratic forms, there exists a unique self-adjoint operator associated with this form. Informally, this operator is 
\begin{align}\label{FormalOperatorLSModel}
\mathrm{h}^{N}_{\gamma}(\omega):=-\frac{\ud^2}{\ud x^2}+\gamma \sum_{j:x_{j}(\omega) \in \Lambda_N}\delta(\cdot-x_j(\omega))\ .
\end{align}
The spectrum of $\mathrm{h}^{N}_{\gamma}(\omega)$ is $\mathds{P}$-almost surely purely discrete. 
%
%
We write $(E^{j,\omega}_N)_{j=1}^{\infty}$ for the eigenvalues of $\mathrm{h}^{N}_{\gamma}(\omega)$, ordered in increasing order, i.e., $0 < E^{1,\omega}_N < E^{2,\omega}_N < E^{3,\omega}_N < ...$, taking into account that the eigenvalues are $\mathds{P}$-almost surely non-degenerate~\cite{kirsch1985universal}. Also, we denote the eigenfunctions corresponding to $(E^{j,\omega}_N)_{j=1}^{\infty}$ as $(\varphi^{j,\omega}_N)_{j=1}^{\infty} \subset L^2(\Lambda_N)$.

We define the integrated density of states $\mathcal{N}^{\mathrm{I},\omega}_{N}:\mathds{R} \rightarrow \mathds{N}_0$ and the density of states (measure\footnote{When we speak of the density of states we always mean the corresponding measure and never the density in the sense of the Radon--Nikodym derivative of this measure with respect to the Lebesgue measure. We do not even know whether the latter exists in this model.}) $\mathcal{N}^{\omega}_{N}$ associated with $\mathrm{h}^{N}_{\gamma}(\omega)$, for $\mathds{P}$-almost all $\omega \in \Omega$ and all $N \in \mathds{N}$, via
\begin{align}  \label{definition finite volume integrated density of states}
\mathcal N_N^{\mathrm{I},\omega}(E) := \int\limits_{(-\infty,E)} \mathcal N_N^{\omega} (\mathrm{d} \tilde{E}) := \dfrac{1}{|\Lambda_N|} \left| \left\{ j : E_N^{j,\omega} < E \right\} \right|\ , \quad E \in \mathds{R}\ .
\end{align}
Note that $\mathcal N_N^{\mathrm{I},\omega}(\cdot)$ is non-decreasing and left-continuous.

In the grand-canonical ensemble, the (mean) number of particles $n_N^{j,\omega}$ occupying the eigenstate $\varphi^{j,\omega}_N$ at inverse temperature $\beta \in (0,\infty)$ is given by
\begin{align}
n_N^{j,\omega} = \left( \mathrm{e}^{\beta (E_N^{j,\omega} - \mu_N^{\omega})} - 1 \right)^{-1}\ ,
\end{align}
for $\mathds{P}$-almost all $\omega \in \Omega$. Here, $\mu_N^{\omega} \in (-\infty,E^{1,\omega}_{N})$ denotes the chemical potential, see \cite[Section 5.2.5]{BratteliRobinson}, \cite{LanWil79}. The chemical potential $\mu_N^{\omega}$ is such that
\begin{align} \label{Gleichung Bedingung mu}
\sum_{j=1}^{\infty}  n_N^{j,\omega}  = N 
\end{align}
holds for $\mathds{P}$-almost all $\omega \in \Omega$ and all $N \in \mathds{N}$. Note that at fixed $\rho$ and $\beta$, $\mu_N^{\omega}$ is uniquely determined. Also, \eqref{Gleichung Bedingung mu} can be written equivalently as
\begin{align} \label{Gleichung Bedingung mu aequivalent}
\int\limits_{\mathds{R}} \left( \mathrm{e}^{\beta (E - \mu_N^{\omega})} - 1 \right)^{-1} \mathcal N_N^{\omega} (\mathrm{d} E) =\rho\ .
\end{align}
To simplify notation later on we introduce the Bose function $\mathcal{B}:\mathds R  \rightarrow \mathds{R}$, $$\mathcal{B}(E):=\left( \mathrm{e}^{\beta E} - 1 \right)^{-1} \Theta(E) \ , $$
where $\Theta$ is the indicator function on $(0,\infty)$.
\begin{defn}[Thermodynamic limit] For fixed $\rho,\beta > 0$, the thermodynamic limit is realized as the limit $N \rightarrow \infty$ together with $L_N:=N/\rho$ and $\mu_N^{\omega}$ such that~\eqref{Gleichung Bedingung mu} holds for $\mathds{P}$-almost all $\omega \in \Omega$ and all $N \in \mathds{N}$. 
\end{defn}
\begin{defn}[Bose--Einstein condensation]\label{Definition makroskopische Besetzung}
	For fixed $\rho,\beta > 0$, we say that the $j$th eigenstate is $\mathds{P}$-almost surely \textit{macroscopically occupied} (in the thermal equilibrium state characterized by $\rho$ and $\beta$) if
	\begin{align} \label{Def m o}
	\mathds P \left(\omega: \limsup\limits_{N \to \infty} \dfrac{n_N^{j,\omega}}{N}   > 0 \right) = 1 \ .
	\end{align}
	Moreover, \textit{generalized BEC} is said to occur $\mathds{P}$-almost surely if an arbitrarily narrow energy interval at the lower edge of the spectrum is $\mathds{P}$-almost surely macroscopically occupied, i.e., if 
	\begin{align}
	\mathds P \left(\omega:\lim\limits_{\epsilon \searrow 0} \limsup\limits_{N \to \infty} \dfrac{1}{N} \sum\limits_{j : E_N^{j,\omega}-E_N^{1,\omega} \le \epsilon}   n_N^{j,\omega}  > 0 \right) = 1 \ . 
	\end{align}
	\end{defn}
\begin{remarks} \begin{enumerate} \item[(i)] In Definition \ref{Definition makroskopische Besetzung} there is not a single fixed state that is macroscopically occupied but there is a sequence of eigenstates $(\varphi^{j,\omega}_N)_{j=1}^{\infty}$ with the property \eqref{Def m o}. This is a common abuse of language, which we adopt. 

\item[(ii)] One could argue to call the $j$th eigenstate $\mathds{P}$-almost surely \emph{not} macroscopically occupied if $\mathds{P}$-almost surely $n_N^{j,\omega}/N$ converges to zero as $N\to\infty$ and then the $j$th eigenstate $\mathds{P}$-almost surely macroscopically occupied if $\mathds{P}$-almost surely it does not converge to zero. The latter condition is \eqref{Def m o}. Still it would be nice to replace the limit superior in~\eqref{Def m o} by the limit inferior or the limit if it existed. In any case, our result implies that $\mathds{P}$-almost surely for some subsequence of intervals the fraction of particles occupying the ground state converges to some (strictly) positive value.

\item[(iii)] One may replace the $\mathds P$-almost sure property in \eqref{Def m o} by the condition that the expectation $\mathds E[n_N^{j,\omega}/N]$ is strictly positive, uniformly in $N$. See Remark \ref{remark 3.6}.
\end{enumerate}
\end{remarks}
%

From Definition~\ref{Definition makroskopische Besetzung} it is clear that a $\mathds{P}$-almost sure macroscopic occupation of the one-particle ground state implies $\mathds{P}$-almost sure generalized BEC. However, the converse does not always hold, see, e.g., \cite{van1982generalized} or \cite{LenobleZagrebnovLuttingerSy}. 


In order to prove generalized BEC for the LS model, one makes use of the fact that the limiting integrated density of states $\mathcal N_{\infty}^{\mathrm{I}}(\cdot)=\lim_{N \rightarrow \infty}\mathcal N_N^{\mathrm{I},\omega}(\cdot)$ $\mathds{P}$-almost surely (the limit being in the vague sense) exhibits a Lifshitz-tail behavior at the bottom of the spectrum. We write $\mathcal N_{\infty}^{\mathrm{I}}(E):=\int_{(-\infty,E)} \mathcal N_{\infty} (\mathrm{d} E)$ and call $\mathcal N_{\infty}$ the limiting density of states (measure). 

Alternatively, see for instance \cite[(3.7)]{HupferLeschkeDensityofStates}, one can define $\mathcal N_{\infty}^{\mathrm{I}}$ directly for the infinite-volume Hamiltonian $h_\gamma(\omega)$ on $\mathds R$. To this end, let $\Theta$ be the indicator function on $(0,\infty)$ and for $x,y\in\mathds{R}$ let $\Theta(E - h_\gamma(\omega))(x,y)$ be the kernel of the spectral projection of $h_\gamma(\omega)$ onto the eigenspace with eigenvalues (strictly) less than $E$. Then, for all $E \in \mathds R$,
$$ \mathcal N_\infty^{\mathrm{I}}(E) = \mathds E\big[\Theta(E-h_\gamma(\omega))(0,0)\big]\,.
$$ 
This function on the right-hand side has the desired properties of being non-negative, left-continuous, and non-decreasing. Moreover, $\mathcal N_N^{\mathrm{I},\omega}$ converges to this function vaguely $\mathds{P}$-almost surely as $N\to\infty$.


%
 \begin{theorem} {\cite{EggarterSomeExact,GredeskulPastur75},\cite[Theorems 5.29,~6.7]{pastur1992spectra}} \label{Lifshitz Auslaufer one dimensional} 
 	For the LS model one has
 	\begin{align}
 	\mathcal N_{\infty}^{\mathrm{I}}(E) = \exp \left( - \pi \nu E^{-1/2} \big[1 + O(E^{1/2}) \big] \right)
 	\end{align}
 	as $E \searrow 0$, i.e., there exists a constant $\widetilde M := \widetilde M(\gamma) > 0$ and an $\widetilde E > 0$ such that for all $0 < E \le \widetilde E$  we have
 	\begin{align}
 	\mathcal N_{\infty}^{\mathrm{I}}(E) \le \widetilde M \mathrm{e}^{-\nu \pi E^{-1/2}}  \ .
 	\end{align}
 	In addition, there exists a constant $c > 0$ such that for all $E\ge0$
 	\begin{align}\label{UpperEstimateIntegratedDensityStates}
 	\mathcal N_{\infty}^{\mathrm{I}}(E) \leq cE^{1/2}\ .
 	\end{align}
 \end{theorem}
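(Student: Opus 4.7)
My plan is to combine Dirichlet--Neumann bracketing on intervals whose length is tuned to $E$ with the large-deviation estimate for Poisson gaps. The three assertions correspond to a crude global bound, a lower bound (implicit in the asymptotic), and a matching upper bound.

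The global estimate $\mathcal N_\infty^{\mathrm I}(E) \le c E^{1/2}$ is immediate: since $V_\gamma(\omega,\cdot) \ge 0$ as a quadratic form, $\mathrm{h}_\gamma(\omega) \ge -\mathrm{d}^2/\mathrm{d}x^2$, and the IDS is dominated by the free one-dimensional IDS $\sqrt E/\pi$. For the lower bound implicit in the asymptotic, I would use Dirichlet bracketing at the Poisson points $x_j(\omega)$. The decoupled operator on each gap $(x_j,x_{j+1})$ is the Dirichlet Laplacian, with lowest eigenvalue $\pi^2/(x_{j+1}-x_j)^2$. Because the Dirichlet form domain is a subspace of the form domain of $\mathrm{h}_\gamma$ on which the two forms agree, the original eigenvalues are bounded above by the decoupled ones. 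By ergodicity of the Poisson process, the density of gaps of length at least $\pi/\sqrt E$ per unit length is $\nu\,\mathrm{e}^{-\nu\pi E^{-1/2}}$, giving a lower bound for $\mathcal N_\infty^{\mathrm I}(E)$ of the desired form after absorbing the prefactor $\nu$ into the $O(E^{1/2})$ correction in the exponent.

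For the upper bound I would partition $\mathds{R}$ into intervals $I_n$ of length $\ell = \ell(E) \asymp E^{-1/2}$ and apply Neumann bracketing at the partition points; by stationarity,
\begin{align*}
\mathcal N_\infty^{\mathrm I}(E) \le \frac{1}{\ell}\,\mathds{E}\bigl[\#\{\text{eigenvalues of } \mathrm{h}_\gamma \text{ on } I_0 \text{ with Neumann BCs} \le E\}\bigr].
\end{align*}
If $I_0$ contains no Poisson points, the Neumann Laplacian has eigenvalues $(\pi n/\ell)^2$, only $O(1)$ of which lie below $E$; the probability of this event is $\mathrm{e}^{-\nu\ell}$, which for $\ell\asymp \pi/\sqrt E$ already has the desired form. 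If $I_0$ contains at least one impurity, the transcendental Neumann eigenvalue equation for $\mathrm{h}_\gamma$ on $I_0$ (which couples $\cos(\sqrt E\,y_j)$ factors with $\gamma$) yields that the first eigenvalue lies below $E$ only if some sub-gap of $I_0$ has length at least $(1-\eta(E))\pi/\sqrt E$ with $\eta(E) = O(\sqrt E/\gamma)$; a union bound combined with the Poisson gap law then closes the estimate.

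The main obstacle is this last step: quantifying, uniformly in the number of impurities in $I_0$, how much a $\gamma\delta$-barrier at low energy $E$ fails to impose a Dirichlet condition. The natural small parameter is $\sqrt E/\gamma$, which is precisely what produces the $O(E^{1/2})$ correction in the exponent of the Lifshitz asymptotic; the dependence of $\widetilde M$ and of the threshold $\widetilde E$ on $\gamma$ enters through the need to absorb the multiplicative factor $\exp(\nu\pi E^{-1/2}\eta(E))$ into a constant, which is only possible on a range $E\le\widetilde E$ determined by $\gamma$.
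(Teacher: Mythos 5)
First, note that the paper does not prove Theorem~\ref{Lifshitz Auslaufer one dimensional} at all: it is quoted from \cite{EggarterSomeExact,GredeskulPastur75} and \cite[Theorems 5.29, 6.7]{pastur1992spectra}, where the one-dimensional Poisson $\delta$-model is handled by the explicitly solvable phase (rotation-number) method rather than by bracketing. Of your three steps, two are fine: the bound $\mathcal N_{\infty}^{\mathrm I}(E)\le \pi^{-1}E^{1/2}$ from $V_\gamma\ge 0$ is exactly the comparison the paper records in its Remark after the theorem, and the lower bound via Dirichlet decoupling at the Poisson points (long gaps of density $\nu\,\mathrm e^{-\nu\pi E^{-1/2}}$ each carrying a Dirichlet eigenvalue below $E$) is standard and correct.

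The upper bound, however, contains a step that fails as stated. With Neumann cells $I_0$ of length $\ell\asymp\pi E^{-1/2}$, your criterion ``the first eigenvalue lies below $E$ only if some sub-gap has length at least $(1-\eta)\pi E^{-1/2}$'' is false for sub-gaps adjacent to the artificial Neumann endpoints. If the first impurity sits at distance $x_1>\pi/(2\sqrt E)$ from the left endpoint of $I_0$, the test function $\varphi(x)=\cos(\pi x/(2x_1))$ on $(0,x_1)$, extended by zero, lies in $H^1(I_0)$, vanishes at every impurity, and has Rayleigh quotient $\pi^2/(4x_1^2)<E$; so a boundary sub-gap of only \emph{half} the Dirichlet length already produces an eigenvalue below $E$. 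Such configurations occur with probability of order $\mathrm e^{-\nu\pi E^{-1/2}/2}$, which after dividing by $\ell$ swamps the target $\mathrm e^{-\nu\pi E^{-1/2}}$: your bracketing bound can only yield the exponent $\nu\pi/2$, not the sharp Lifshitz constant $\nu\pi$ that the theorem (and its later applications in Theorems~\ref{satz E2} and~\ref{LSM ns main theorem}, which rely on the precise constant) requires. The standard repairs are either to let the cell length grow much faster than $E^{-1/2}$ (even exponentially in $E^{-1/2}$) and use that the Neumann and Dirichlet counting functions on an interval differ by at most $2$, or to bracket at the Poisson points themselves, retaining half of each $\delta$ on either side, so that no ``free'' Neumann boundary is created. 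In addition, the step you yourself flag as the main obstacle --- the uniform quantification, in the number of impurities, of how much a finite $\gamma\delta$-barrier fails to act as a Dirichlet wall, producing the $O(\sqrt E/\gamma)$ correction --- is only named, not proved; since this is precisely where the finiteness of $\gamma$ enters (the whole point of the present paper as opposed to the $\gamma=\infty$ model), the proposal cannot be regarded as a proof of the stated estimate.
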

\begin{remark} In the proof of Lemma \ref{Lemma wegen vager statt schwacher Konvergenz} we use the estimate $\mathcal N_{N}^{\mathrm{I,\omega}}(E) \leq  \mathcal N_N^{\mathrm{I},(0)}(E)\leq \pi^{-1}\sqrt{E}$ for $\mathds P$-almost all $\omega \in \Omega$, all $E \ge 0$ and all $N \in \mathds N$ where $\mathcal N_N^{\mathrm{I},(0)}$ denotes the integrated density of states of the free Hamiltonian $-\ud^2/ \ud x^2$ on $H^1_0(\Lambda_N)$.
\end{remark}
Theorem~\ref{Lifshitz Auslaufer one dimensional} implies that the critical density 
\begin{equation}
\rho_{c}(\beta):=\sup_{\mu \in (-\infty,0)}\left\{\int\limits_{\mathds{R}} \mathcal{B}(E-\mu)\ \mathcal N_{\infty} (\ud E) \right\}\label{EquationCriticalDensity}
\,=\,\int_{(0,\infty)}\mathcal{B}(E)\ \mathcal N_{\infty} (\ud E)
\end{equation}
is finite, i.e., $\rho_{c}(\beta) < \infty$; see Lemma~\ref{finite critical density in LSmodel}.
\begin{remark}\label{RemarkSubsequenceChemicalPotentials} Whenever $\rho \geq \rho_{c}(\beta)$ then the sequence $\mu^{\omega}_N$ of chemical potentials converges $\mathds P$-almost surely to zero and to a strictly negative value whenever $\rho < \rho_c(\beta)$, see Lemma~\ref{Satz Konvergenzverhalten mu} and~\cite{lenoble2004bose}. In this context note that the infimum of the spectrum of  $\mathrm{h}_{\gamma}(\omega)$ is zero.
	\end{remark}
For the LS model one can then prove the following statement.
\begin{theorem}{\cite{lenoble2004bose}}\label{TheoremGeneralizedBEC} Generalized BEC in the LS model occurs $\mathds P$-almost surely if and only if $\rho > \rho_{c}(\beta)$.
\end{theorem}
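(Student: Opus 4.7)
The plan is to start from the chemical-potential identity~\eqref{Gleichung Bedingung mu aequivalent}, split the integral at an energy $\epsilon>0$, and use the elementary rewrite
\begin{align*}
\frac{1}{N}\sum_{j:\,E_N^{j,\omega}<\epsilon} n_N^{j,\omega} \;=\; 1\;-\;\frac{1}{\rho}\int_{[\epsilon,\infty)}\mathcal{B}(E-\mu_N^\omega)\,\mathcal{N}_N^\omega(\mathrm{d}E)\,,
\end{align*}
which follows from $N=\rho|\Lambda_N|$ and the definition of $\mathcal{N}_N^\omega$. Since $E_N^{1,\omega}\to 0$ $\mathds{P}$-a.s.\ (place a sinusoidal Dirichlet trial function on the longest Poisson gap inside $\Lambda_N$, whose length grows like $\nu^{-1}\log N$), the condition $E_N^{j,\omega}-E_N^{1,\omega}\le\epsilon$ in Definition~\ref{Definition makroskopische Besetzung} and the condition $E_N^{j,\omega}<\epsilon$ are interchangeable up to a shift $\epsilon\mapsto\epsilon\pm\eta$ with $\eta\searrow0$; it thus suffices to compute $\lim_{\epsilon\searrow0}\limsup_{N\to\infty}$ of the right-hand side above.

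For the \emph{if} direction, assume $\rho>\rho_c(\beta)$. By Remark~\ref{RemarkSubsequenceChemicalPotentials}, $\mu_N^\omega\to 0$ $\mathds{P}$-a.s. The plan is to show that for every continuity point $\epsilon>0$ of $\mathcal N_\infty^{\mathrm I}$,
\begin{align*}
\lim_{N\to\infty}\int_{[\epsilon,\infty)}\mathcal{B}(E-\mu_N^\omega)\,\mathcal{N}_N^\omega(\mathrm{d}E)\;=\;\int_{[\epsilon,\infty)}\mathcal{B}(E)\,\mathcal{N}_\infty(\mathrm{d}E)\,,
\end{align*}
in three pieces: (i) uniform continuity of $\mathcal{B}$ on compact subsets of $(0,\infty)$ replaces $\mathcal{B}(E-\mu_N^\omega)$ by $\mathcal{B}(E)$ with vanishing error as $\mu_N^\omega\to 0$; (ii) $\mathds{P}$-a.s.\ vague convergence $\mathcal{N}_N^\omega\to\mathcal{N}_\infty$ handles the compact window $[\epsilon,R]$; (iii) the uniform bound $\mathcal N_N^{\mathrm I,\omega}(E)\le\pi^{-1}\sqrt{E}$ from the remark after Theorem~\ref{Lifshitz Auslaufer one dimensional}, combined with the exponential decay of $\mathcal{B}$, makes the tail $[R,\infty)$ uniformly small in $N$ after an integration by parts. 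Sending $\epsilon\searrow0$ and invoking monotone convergence together with~\eqref{EquationCriticalDensity} yields $\lim_{\epsilon\searrow 0}\limsup_{N\to\infty}$ of the bottom fraction equal to $1-\rho_c(\beta)/\rho>0$.

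For the \emph{only if} direction, consider first $\rho<\rho_c(\beta)$: Remark~\ref{RemarkSubsequenceChemicalPotentials} gives $\mu_N^\omega\to\mu_*<0$ $\mathds{P}$-a.s., so $\mathcal{B}(\,\cdot\,-\mu_N^\omega)$ is uniformly bounded on $[0,\epsilon]$ for large $N$, and the Lifshitz-tail asymptotics of Theorem~\ref{Lifshitz Auslaufer one dimensional} force $\mathcal{N}_\infty^{\mathrm I}(\epsilon)\to 0$; the bottom fraction therefore vanishes as $\epsilon\searrow0$. At the boundary $\rho=\rho_c(\beta)$ the same passage to the limit as in the $\rho>\rho_c(\beta)$ case, applied to the complement, gives the limit $\rho_c(\beta)/\rho=1$, so again no generalized BEC occurs. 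The hardest part is step (iii) in the \emph{if} direction: although the bound $\mathcal N_N^{\mathrm I,\omega}\le\pi^{-1}\sqrt{E}$ is elementary, one must carefully exploit the exponential tail of $\mathcal{B}(E-\mu_N^\omega)$ uniformly in $N$ as $\mu_N^\omega$ drifts to $0$, so that the tail integral on $[R,\infty)$ does not interfere when the limits $N\to\infty$ and $R\to\infty$ are exchanged.
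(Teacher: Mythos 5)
Your proposal is correct and follows essentially the same route as the paper: split the density identity at an energy $\epsilon$, use the almost-sure behaviour of $\mu_N^{\omega}$ (Remark~\ref{RemarkSubsequenceChemicalPotentials}), combine vague convergence of $\mathcal N_N^{\omega}$ on a compact window with the free-Laplacian bound $\mathcal N_N^{\mathrm{I},\omega}(E)\le\pi^{-1}\sqrt{E}$ for the tail, and send $\epsilon\searrow 0$ to identify the excited-state contribution with $\rho_c(\beta)$ via~\eqref{EquationCriticalDensity}. The only (harmless) technical variation is that you restrict to continuity points of $\mathcal N_{\infty}^{\mathrm I}$ and argue Portmanteau-style to get an actual limit, whereas the paper smooths the indicator of $(\epsilon,\infty)$ and accepts boundary errors of order $\epsilon^{-1}\mathcal N_{\infty}^{\mathrm I}(2\epsilon)$, which are then killed as $\epsilon\searrow 0$ by the Lifshitz tail (Lemma~\ref{Lemma beweis lim int epsilon infty und lim lim int epsilion infty rhoc vage Konvergenz}).
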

%

%
\section{Main results}\label{secMainResults}
Since our goal is to prove macroscopic occupation of the one-particle ground state it is necessary to control the rate of convergence of the energies of excited states which also converge to zero.
\begin{remark}\label{RemarkTwoSets} We repeatedly use the following inequality: If $\Omega_1,\Omega_2 \subset \Omega$ are two events with $\mathds{P}(\Omega_j)\geq 1-\eta_j$ then 
	\begin{align*}
	\mathds{P}(\Omega_1 \cap \Omega_2) \geq 1 - \eta_1-\eta_2\ .
	\end{align*}
\end{remark}
\begin{theorem}[Energy gap] \label{satz E2}
	There exists a constant $M > 0$ with the following property: For any $0 < \eta < 2$ and any $c_2 > 2$ there exists an $\widetilde N = \widetilde N(\eta, c_2) \in \mathds N$ such that for all $N \ge \widetilde N$ we have $\mathds P(\Omega_2(\eta,c_2)) > 1 - 5\eta/8$ where
	\begin{align*}
	\Omega_2(\eta,c_2) := \left\{ \omega: E_N^{1,\omega}  \le  \left( \dfrac{\pi\nu}{\ln (c_1 L_N)} \right)^2 \text{and}\ E_N^{c_3,\omega} \ge \left( \dfrac{\pi\nu}{\ln (c_1 L_N) - \ln(c_2/2)} \right)^2 \right\} \ ,
	\end{align*}
	$c_1 := - \nu / [4 \ln(\eta / 2)]$, and $c_3 := \lceil 4M  c_2/ (\eta c_1) \rceil + 1$.
\end{theorem}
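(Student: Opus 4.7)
My plan is to split $\Omega_2(\eta,c_2)$ into two sub-events, each responsible for one of the two inequalities, and combine them via Remark~\ref{RemarkTwoSets}. The upper bound $E_N^{1,\omega}\le(\pi\nu/\ln(c_1L_N))^2$ will come from exhibiting a sufficiently long atom-free subinterval of $\Lambda_N$ (a Lifshitz-type ``gap'') on which a sine trial function attains the target energy, and the lower bound $E_N^{c_3,\omega}\ge(\pi\nu/(\ln(c_1L_N)-\ln(c_2/2)))^2$ will come from Markov's inequality applied to the expected finite-volume integrated density of states, controlled by the Lifshitz tail of Theorem~\ref{Lifshitz Auslaufer one dimensional}.

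For the upper bound, I set $\ell_1:=\nu^{-1}\ln(c_1L_N)$ so that $(\pi/\ell_1)^2$ equals the target. A standard Poisson-process calculation (conditioning on $|X(\omega)\cap\Lambda_N|$ and using exchangeability of gaps under uniform order statistics, or Stein--Chen Poisson approximation for the number of ``long'' gaps) shows that
\begin{align*}
\mathds P\bigl(\text{no atom-free subinterval of }\Lambda_N\text{ of length}\ge\ell_1\bigr) \;\le\; \exp(-\nu L_N\,\mathrm e^{-\nu\ell_1}) + o(1) \;=\; (\eta/2)^4 + o(1),
\end{align*}
as $N\to\infty$, where I used $\mathrm e^{-\nu\ell_1}=(c_1L_N)^{-1}$ and $\nu/c_1 = 4\ln(2/\eta)$. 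On the complementary event, pick such a gap $(a,b)\subseteq\Lambda_N$ with $b-a\ge\ell_1$ and define $\varphi(x):=\sqrt{2/(b-a)}\,\sin(\pi(x-a)/(b-a))$ on $[a,b]$, extended by $0$ to $\Lambda_N$. Then $\varphi\in H_0^1(\Lambda_N)$, and since it vanishes at every Poisson atom in $\Lambda_N$, the $\gamma$-penalty in $q_{\gamma}^{N}(\omega)[\varphi]$ drops out and $q_{\gamma}^{N}(\omega)[\varphi]=\pi^2/(b-a)^2\le(\pi\nu/\ln(c_1L_N))^2$; the min--max principle then yields the upper bound on $E_N^{1,\omega}$.

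For the lower bound, write $E:=(\pi\nu/(\ln(c_1L_N)-\ln(c_2/2)))^2$ and note that $E_N^{c_3,\omega}\ge E$ is equivalent to $L_N\mathcal N_N^{\mathrm I,\omega}(E)<c_3$. Markov's inequality gives
\begin{align*}
\mathds P\bigl(L_N\mathcal N_N^{\mathrm I,\omega}(E)\ge c_3\bigr) \;\le\; \frac{\mathds E\bigl[L_N\mathcal N_N^{\mathrm I,\omega}(E)\bigr]}{c_3}.
\end{align*}
Using an expected-IDOS bound of the form $\mathds E[\mathcal N_N^{\mathrm I,\omega}(E)]\le C\,\mathcal N_\infty^{\mathrm I}(E)$ (which I would obtain via Dirichlet--Neumann bracketing of $\mathrm h_{\gamma}^{N}(\omega)$ against the infinite-volume $\mathrm h_\gamma(\omega)$, or take from the appendix), together with the Lifshitz tail of Theorem~\ref{Lifshitz Auslaufer one dimensional} and the identity $\nu\pi/\sqrt E = \ln(c_1L_N)-\ln(c_2/2)=\ln(2c_1L_N/c_2)$, I obtain
\begin{align*}
\mathds E\bigl[L_N\mathcal N_N^{\mathrm I,\omega}(E)\bigr] \;\le\; C\widetilde M\,L_N\,\mathrm e^{-\nu\pi/\sqrt E} \;=\; (C\widetilde M/2)\,c_2/c_1.
\end{align*}
Setting $M:=C\widetilde M/2$ and invoking $c_3\ge 4Mc_2/(\eta c_1)$ bounds the Markov probability by $\eta/4$.

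Combining via Remark~\ref{RemarkTwoSets} gives a total failure probability at most $(\eta/2)^4+\eta/4+o(1)$, which is strictly less than $5\eta/8$ for $\eta< 6^{1/3}\approx 1.817$; since $8/5<6^{1/3}$, the remaining range $\eta\in[8/5,2)$ is covered trivially because then $1-5\eta/8\le 0$. Choosing $\widetilde N(\eta,c_2)$ large enough to absorb the $o(1)$ Poisson correction and to ensure that $E$ lies in the Lifshitz regime of Theorem~\ref{Lifshitz Auslaufer one dimensional} completes the argument. The main obstacle is securing the expected-IDOS estimate $\mathds E[\mathcal N_N^{\mathrm I,\omega}(E)]\le C\,\mathcal N_\infty^{\mathrm I}(E)$ uniformly in $N$ at the relevant low energies with an explicit constant $C$; the Poisson gap estimate is classical input, and the variational step is routine once a long enough gap is in hand.
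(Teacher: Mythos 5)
Your proposal follows essentially the same route as the paper: split $\Omega_2$ into the two one-sided events, establish the ground-state upper bound by a Poisson-gap/Rayleigh--Ritz argument (the paper cites \cite[Theorem~C.6]{KPS18} for precisely the largest-gap estimate you sketch from first principles), establish the lower bound on $E_N^{c_3,\omega}$ by Markov's inequality combined with the expected-IDOS bound of Lemma~\ref{lemma E2} and the Lifshitz tail of Theorem~\ref{Lifshitz Auslaufer one dimensional}, and combine via Remark~\ref{RemarkTwoSets}. Your probability budget $(\eta/2)^4+\eta/4$ differs from the paper's $\eta/2+\eta/8$, but you correctly verify it still fits below $5\eta/8$ (with the triviality observation for $\eta$ near $2$); the only small imprecision is that the expected-IDOS estimate should be used with the energy shift $\mathcal E$ as in Lemma~\ref{lemma E2} (yielding $\mathds E[\mathcal N_N^{\mathrm I,\omega}(E)]\le \widetilde M\, \mathrm{e}^{\nu\pi\mathcal E}\,\mathrm{e}^{-\nu\pi E^{-1/2}}$) rather than in the form $\mathds E[\mathcal N_N^{\mathrm I,\omega}(E)]\le C\,\mathcal N_\infty^{\mathrm I}(E)$, since the latter is only known Lebesgue-a.e.\ and $\widetilde E_N$ could in principle hit a discontinuity point of $\mathcal N_\infty^{\mathrm I}$.
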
 
\begin{proof} In a first step we note that, by~\cite[Theorem~C.6]{KPS18} and the Rayleigh--Ritz variational principle, for $0 < \eta < 2$ there exists an $\widetilde N = \widetilde N(\eta) \in \mathds N$ such that 
	\begin{align*}
	\mathds P \left( \omega: E_N^{1,\omega} \le  \left( \dfrac{\pi\nu}{\ln(c_1L_N) } \right)^2  \right) > 1 - \dfrac{1}{2}\eta
	\end{align*}
	for all $N \ge \widetilde N$ with $c_1 := - \nu / [4 \ln(\eta/ 2)]$. 
	
	 We set
	 \begin{align}\label{TildeEnergie}
	 \widetilde E_N :=  \left( \dfrac{\pi\nu}{\ln (c_1 L_N) - \ln(c_2/2)} \right)^2 \ .
	 \end{align}
   	 Then, according to Theorem~\ref{Lifshitz Auslaufer one dimensional} and Lemma~\ref{lemma E2} (identifying $\widetilde{M}=M^{1/2}$) there exists an $M > 0$ and an $\widetilde N_1 \in \mathds N$ such that, with $\mathcal E:=(\pi\nu )^{-1}\ln(M^{1/2})$,
	 \begin{align*}
	 & \E \left[ \mathcal N_N^{\mathrm{I}, \omega} \left( \widetilde E_N \right) \right] \le \mathcal N_{\infty}^{\mathrm{I}} \left( \left(\widetilde E_N^{-1/2} - \mathcal E \right)^{-2} \right) \le M  \exp \left( - \nu \pi \widetilde E_N^{-1/2} \right)
	 = M \dfrac{c_2}{2 c_1L_N}
	 \end{align*}
	 and consequently
	 \begin{align*}
	 \E \left[ \left| \left\{ j : E_N^{j,\omega} \le \widetilde E_N \right\} \right| \right] = & L_N \cdot \E \left[ \mathcal N_N^{\mathrm{I},\omega} \left( \widetilde E_N \right) \right]  \le M \dfrac{c_2}{2 c_1}
	 \end{align*}
	 for all $N \ge \widetilde N_1$. Therefore,
	 \begin{align*}
	 k \, \mathds P \left( \omega: \left| \left\{ j : E_N^{j,\omega} \le \widetilde E_N \right\} \right| \ge k \right) & \le \E \left[ \left| \left\{ j : E_N^{j,\omega} \le \widetilde E_N \right\} \right|  \right] \le M \dfrac{c_2}{2 c_1}
	 \end{align*}
	 and 
	 \begin{align*}
	 \mathds P \left( \omega: E_N^{k + 1,\omega} \ge \widetilde E_N \right) = \mathds P \left( \omega: \left| \left\{ j : E_N^{j,\omega} \le \widetilde E_N \right\} \right| \le k \right) & \ge 1 - M \dfrac{c_2}{2 c_1 k }
	 \end{align*}
	 for all $k \in \mathds N$ and all $N \ge \widetilde N_1$. Setting $k = c_3 - 1$ now finishes this proof by taking Remark~\ref{RemarkTwoSets} into account.
\end{proof}
For the following statement we introduce, for fixed $\rho,\beta$ and $\mathds{P}$-almost all $\omega \in \Omega$,
\begin{align}\label{DensityRhoZero}\begin{split}
\rho_0(\beta):=\lim_{\epsilon \searrow 0}\liminf_{N \rightarrow \infty}\int\limits_{(0,\epsilon]} \mathcal{B}(E - \mu_N^{\omega}) \mathcal N_N^{\omega} (\mathrm{d} E) \ .
\end{split}
\end{align}
Note that the right-hand side is $\mathds{P}$-almost surely equal to a non-random function of $\beta$; see Lemma~\ref{Lemma beweis lim int epsilon infty und lim lim int epsilion infty rhoc vage Konvergenz} for details. Moreover, this lemma implies $\rho_0(\beta) = \rho-\rho_{c}(\beta)$ whenever $\rho > \rho_{c}(\beta)$.
\begin{theorem}[Macroscopic occupation in probability] \label{LSM ns main theorem}
	Assume that $\rho > \rho_c(\beta)$. Then there exists an $M > 0$ with the following property: For all $0 < \eta \le 1/2$ and all $c_2 > 2$ there exists an $\widetilde N = \widetilde N(\eta, c_2) \in \mathds N$ such that for all $N \ge \widetilde N$,
	\begin{align*}
	\mathds P \left(\omega: \dfrac{n_N^{1,\omega}}{N}  \ge \dfrac{1}{c_3} \rho^{-1}(1 - \eta^{1/2})(1 - \eta) \rho_0(\beta) \right) \ge 1 - 4 \dfrac{\rho_0(\beta) + \rho + 1}{\rho_0(\beta)} \eta^{1/2} - 6\eta/8
	\end{align*}
	with $c_3 = c_3(\eta, c_2, M) = \lceil 4M  c_2/ (\eta c_1) \rceil + 1 $ and $c_1$ as in~Theorem~\ref{satz E2}.
\end{theorem}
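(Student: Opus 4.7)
The plan is to combine the energy-gap estimate of Theorem~\ref{satz E2} with the chemical-potential identity~\eqref{Gleichung Bedingung mu aequivalent} through a pigeonhole argument. The key structural observation is that, since $\mu_N^\omega < E_N^{1,\omega} < E_N^{2,\omega} < \cdots$, the sequence $j \mapsto n_N^{j,\omega}$ is strictly decreasing, so $n_N^{1,\omega}$ majorizes every other occupation number. Hence, if at most $c_3 - 1$ eigenvalues live below a threshold $\widetilde E_N$ and their total occupation accounts for approximately $\rho_0(\beta) L_N$ particles, then $n_N^{1,\omega}$ must itself be of order $\rho_0(\beta) L_N/c_3$.

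First I would work on the event $\Omega_2(\eta, c_2)$ of Theorem~\ref{satz E2}, which has probability at least $1 - 5\eta/8$. On this event at most $c_3 - 1$ eigenvalues of $\mathrm{h}^{N}_{\gamma}(\omega)$ lie strictly below $\widetilde E_N = (\pi\nu/[\ln(c_1 L_N) - \ln(c_2/2)])^2$, see~\eqref{TildeEnergie}. Combining monotonicity of $n_N^{j,\omega}$ in $j$ with $N = \rho L_N$ yields
\begin{align*}
\frac{n_N^{1,\omega}}{N} \,\geq\, \frac{1}{c_3\, \rho} \int\limits_{(0,\, \widetilde E_N)} \mathcal{B}(E - \mu_N^\omega)\, \mathcal{N}_N^\omega(\ud E),
\end{align*}
so it suffices to show that with probability at least $1 - 4(\rho_0(\beta)+\rho+1)/\rho_0(\beta)\cdot \eta^{1/2} - \eta/8$ the integral on the right exceeds $(1-\eta^{1/2})(1-\eta)\rho_0(\beta)$. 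By~\eqref{Gleichung Bedingung mu aequivalent} this is equivalent to an upper bound of roughly $\rho_c(\beta) + O(\eta^{1/2})\rho_0(\beta)$ on the complementary tail $\int_{[\widetilde E_N, \infty)}\mathcal{B}(E-\mu_N^\omega)\mathcal{N}_N^\omega(\ud E)$. This tail morally converges to $\int_{(0,\infty)}\mathcal B(E)\mathcal N_\infty(\ud E)=\rho_c(\beta)$, because $\mu_N^\omega\to 0$ by Remark~\ref{RemarkSubsequenceChemicalPotentials}, $\widetilde E_N\to 0$, and $\mathcal{N}_N^\omega\to\mathcal{N}_\infty$ vaguely. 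To make this quantitative I would split at a small fixed $\epsilon>0$: on $(\epsilon,\infty)$ the integrand is uniformly bounded, so vague convergence together with the a priori tail bound $\mathcal N_N^{\mathrm{I},\omega}(E)\leq \pi^{-1}\sqrt{E}$ noted after Theorem~\ref{Lifshitz Auslaufer one dimensional} gives control up to an $\eta^{1/2}$ error; the middle slice $(\widetilde E_N,\epsilon]$ is bounded via the Lifshitz-tail estimate of Theorem~\ref{Lifshitz Auslaufer one dimensional} promoted to an in-probability statement by Markov's inequality applied to $\mathds E[\mathcal N_N^{\mathrm{I},\omega}(\epsilon)]$ and to $|\mu_N^\omega|$; finally, Lemma~\ref{Lemma beweis lim int epsilon infty und lim lim int epsilion infty rhoc vage Konvergenz} supplies the matching lower bound on $\int_{(0,\epsilon]}\mathcal B(E-\mu_N^\omega)\mathcal N_N^\omega(\ud E)$ in terms of $\rho_0(\beta)$. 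The four Markov-type error terms assemble into the prefactor $4(\rho_0+\rho+1)/\rho_0$, and Remark~\ref{RemarkTwoSets} combines the condensate-mass event with $\Omega_2(\eta,c_2)$ into the announced probability $1 - 4(\rho_0+\rho+1)/\rho_0\cdot \eta^{1/2} - 6\eta/8$.

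The main obstacle I anticipate lies in the middle slice $\int_{(\widetilde E_N,\epsilon]}\mathcal{B}(E-\mu_N^\omega)\mathcal{N}_N^\omega(\ud E)$: there $\mathcal{B}(E-\mu_N^\omega)$ is nearly singular---both $E$ and $\mu_N^\omega$ decay to zero at logarithmic rates set by the Lifshitz scale $\widetilde E_N\asymp 1/(\ln L_N)^2$---so the crude factorization $\mathcal{B}(\widetilde E_N-\mu_N^\omega)\cdot\mathcal{N}_N^\omega([\widetilde E_N,\epsilon])$ blows up. The super-polynomial Lifshitz decay of $\mathcal N_\infty^{\mathrm I}$ will have to be used to beat the polylogarithmic growth of $\mathcal{B}$ at scale $\widetilde E_N$, and the $\liminf$ in the definition~\eqref{DensityRhoZero} of $\rho_0(\beta)$ must be carefully converted from an almost-sure, per-$\omega$ asymptotic statement into a probability bound uniform in $N$.
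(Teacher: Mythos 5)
Your proposal follows essentially the same route as the paper's proof: the pigeonhole bound $n_N^{1,\omega}/N \ge (c_3\rho)^{-1}\int_{(0,\widetilde E_N]}\mathcal B(E-\mu_N^{\omega})\,\mathcal N_N^{\omega}(\ud E)$ on the energy-gap event, the decomposition of $\rho$ into condensate, intermediate, and tail contributions with the tail controlled by Lemma~\ref{Lemma beweis lim int epsilon infty und lim lim int epsilion infty rhoc vage Konvergenz} and the intermediate slice killed by playing the super-polynomial Lifshitz decay against the polylogarithmic growth of $\mathcal B$ at scale $\widetilde E_N$ (the paper splits that slice once more at $[2\pi\nu/\ln L_N]^2$), and finally a Markov-type argument converting the expectation bound into the stated probability. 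You also correctly identify the genuine technical obstacle, so this is the paper's argument in outline.
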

\begin{proof}
	For $0 < \eta \le 1/2$
		we define, for $N \in \mathds N$,
	\begin{align*}
	\Omega_3(\eta,c_2) & := \left\{ \omega: \left(\dfrac{\pi \nu}{3 \ln (L_N)} \right)^2 \le E_N^{1,\omega} \le \left(\dfrac{\pi\nu}{\ln (c_1 L_N)}\right)^2\text{ and } E_N^{c_3,\omega} \ge \widetilde E_N \right\}
	\end{align*}
	with $c_1,c_3$ as in Theorem~\ref{satz E2} and $\widetilde E_N$ as in \eqref{TildeEnergie}.
	
	According to Theorem~\ref{satz E2} and Lemma~\ref{lower bound ground state energy}, there exists an $\widetilde N = \widetilde N(\eta, c_2) \in \mathds N$ such that for all $N \ge \widetilde N$ we have $\mathds P(\Omega_3(\eta,c_2)) > 1 - 6\eta/8$. Furthermore, 
	\begin{align*}
	\rho & = \int\limits_{(0, \widetilde E_N]} \mathcal{B}(E - \mu_N^{\omega}) \, \mathcal N_N^{\omega}(\mathrm{d} E) + \int\limits_{(\widetilde E_N,[2 \pi \nu / \ln(L_N)]^2]} \mathcal{B}(E - \mu_N^{\omega}) \, \mathcal N_N^{\omega}(\mathrm{d} E) \\
	& \quad + \, \int\limits_{([2 \pi \nu / \ln(L_N)]^2,\epsilon]} \mathcal{B}(E - \mu_N^{\omega}) \, \mathcal N_N^{\omega}(\mathrm{d} E) + \int\limits_{(\epsilon,\infty)} \mathcal{B}(E - \mu_N^{\omega}) \, \mathcal N_N^{\omega}(\mathrm{d} E)
	\end{align*}
	for all $\epsilon > 0$ and all $N \in \mathds N$ large enough. Consequently, 
	\begin{align}
	& \E \left[ \int\limits_{(0,\widetilde E_N]} \mathcal{B}(E - \mu_N^{\omega}) \, \mathcal N_N^{\omega}(\mathrm{d} E) \right] = \\
	& \quad = \rho - \int\limits_{\Omega_3(\eta,c_2)} \left[ \, \int\limits_{(\widetilde E_N,[2 \pi \nu / \ln(L_N)]^2]} \mathcal{B}(E - \mu_N^{\omega})\, \mathcal N_N^{\omega}(\mathrm{d} E) \right] \, \mathds P(\mathrm{d} \omega) \label{proof BEC type ns LSM 1}\\
	& \qquad \ \ \  - \int\limits_{\Omega_3(\eta,c_2)} \left[ \, \int\limits_{([2 \pi \nu / \ln(L_N)]^2,\epsilon]} \mathcal{B}(E - \mu_N^{\omega}) \, \mathcal N_N^{\omega}(\mathrm{d} E) \right] \, \mathds P(\mathrm{d} \omega) \label{proof BEC type ns LSM 2} \\
	& \qquad \ \ \ - \int\limits_{\Omega \backslash \Omega_3(\eta,c_2)} \left[ \, \int\limits_{(\widetilde E_N,\epsilon]} \mathcal{B}(E - \mu_N^{\omega}) \, \mathcal N_N^{\omega}(\mathrm{d} E) \right] \, \mathds P(\mathrm{d} \omega) \label{proof BEC type ns LSM 3} \\
	& \qquad \ \ \ - \E \left[ \, \int\limits_{(\epsilon,\infty)} \mathcal{B}(E - \mu_N^{\omega}) \, \mathcal N_N^{\omega}(\mathrm{d} E) \right] \label{proof BEC type ns LSM 4}
	\end{align}
	for all $\epsilon > 0$ and all $N \in \mathds N$ large enough.
	
Now, in a first step we realize that 
	\begin{align*}
	\lim\limits_{\epsilon \searrow 0} \limsup\limits_{N \to \infty} \mathds E \left[ \, \int\limits_{(\epsilon,\infty)} \mathcal{B}(E - \mu_N^{\omega}) \mathcal N_N^{\omega} (\mathrm{d} E) \right] \le \rho_c(\beta) \ .
	\end{align*}
	This can be seen as follows: for $\epsilon > 0$ we have $\int_{(\epsilon,\infty)} \mathcal{B}(E - \mu_N^{\omega})\mathcal N_N^{\omega} (\mathrm{d} E) \leq \rho$. Moreover, 
	\begin{align*}
	\limsup\limits_{N \to \infty} \int\limits_{(\epsilon,\infty)} \mathcal{B}(E - \mu_N^{\omega}) \mathcal N_N^{\omega} (\mathrm{d} E) \le \int\limits_{(\epsilon,\infty)} \mathcal{B}(E) \mathcal N_{\infty} (\mathrm{d} E) + \dfrac{2}{\beta \epsilon} \mathcal N_{\infty}^{\mathrm{I}}(\epsilon)
	\end{align*}
	$\mathds{P}$-almost surely due to~\eqref{LemmaA6limsup}. Hence, by (reverse) Fatou's Lemma
	\begin{align*}
	 \limsup\limits_{N \to \infty} \mathds E \left[  \int\limits_{(\epsilon,\infty)} \mathcal{B}(E - \mu_N^{\omega}) \mathcal N_N^{\omega} (\mathrm{d} E) \right] \le \int\limits_{(\epsilon,\infty)} \mathcal{B}(E) \mathcal N_{\infty} (\mathrm{d} E) + \dfrac{2}{\beta \epsilon} \mathcal N_{\infty}^{\mathrm{I}}(\epsilon) \ .
	\end{align*}
We conclude
	\begin{align*}
	\lim\limits_{\epsilon \searrow 0} \limsup\limits_{N \to \infty} \mathds E \left[ \int\limits_{(\epsilon,\infty)} \mathcal{B}(E - \mu_N^{\omega}) \mathcal N_N^{\omega} (\mathrm{d} E) \right] 	& \le \lim\limits_{\epsilon \searrow 0} \int\limits_{(\epsilon,\infty)} \mathcal{B}(E) \mathcal N_{\infty} (\mathrm{d} E)  \\
	& = \int\limits_{(0,\infty)} \mathcal{B}(E) \mathcal N_{\infty} (\mathrm{d} E) \\
	& = \rho_c(\beta)
	\end{align*}
	due to Theorem~\ref{Lifshitz Auslaufer one dimensional} and Lemma~\ref{finite critical density in LSmodel}.

	In a next step we show that the integrals in lines \eqref{proof BEC type ns LSM 1}--\eqref{proof BEC type ns LSM 3} either vanish or can be bounded from above in the considered limit: concerning~\eqref{proof BEC type ns LSM 1} we observe that for all but finitely many $N \in \mathds N$ and for all $\omega \in \Omega_3(\eta,c_2)$ we obtain 
	\begin{align} \label{Hinweis im beweis von theorem 3.4}
	\widetilde E_N  - E_N^{1,\omega} \ge \pi^2 \nu^2 \dfrac{2\ln(c_1 L_N) \ln(c_2/2) - [\ln(c_2/2)]^2}{[\ln(c_1 L_N)]^4} \ge \dfrac{\pi^2 \nu^2}{[\ln(c_1 L_N)]^3} \cdot \ln(c_2/2)
	\end{align}
	and therefore
	\begin{align*}
	\mathcal{B}(E - \mu_N^{\omega}) \le \left( \beta (E - E_N^{1,\omega}) \right)^{-1} \le \dfrac{[\ln(c_1 L_N)]^3}{\beta \pi^2 \nu^2} \cdot \dfrac{1}{\ln(c_2/2)}
	\end{align*}
	for all $E \ge \widetilde E_N$. Hence, 
	\begin{align*}
	A & :=\lim\limits_{N \to \infty} \int\limits_{\Omega_3(\eta,c_2)} \left[ \int\limits_{(\widetilde E_N,[2 \pi \nu/ \ln (L_N)]^2]} \mathcal{B}(E - \mu_N^{\omega}) \, \mathcal N_N^{\omega} (\mathrm{d} E) \right] \mathds P(\mathrm{d} \omega) \\
	& \le \lim\limits_{N \to \infty} \int\limits_{\Omega_3(\eta,c_2)} \left[ \int\limits_{(\widetilde E_N,[2 \pi \nu/ \ln (L_N)]^2]} \dfrac{[\ln(c_1 L_N)]^3}{\beta \pi^2 \nu^2} \cdot \dfrac{1}{\ln(c_2/2)} \, \mathcal N_N^{\omega} (\mathrm{d} E) \right] \mathds P(\mathrm{d} \omega) \\
	& \le \lim\limits_{N \to \infty} \dfrac{[\ln(c_1 L_N)]^3}{\beta \pi^2 \nu^2} \cdot \dfrac{1}{\ln(c_2/2)} \int\limits_{\Omega_3(\eta,c_2)} \mathcal N_N^{\mathrm{I},\omega} \left( \dfrac{4 \pi^2 \nu^2}{ [\ln (L_N)]^{2}} \right) \mathds P(\mathrm{d} \omega) \ ,
	\end{align*}
and since $\mathcal N_N^{\mathrm{I},\omega}(E) \ge 0$ for all $E > 0$ and all $\omega \in \Omega_3(\eta,c_2)$ we continue, employing Lemma~\ref{lemma E2} ($\mathcal E = (\pi \nu )^{-1} \ln(M^{1/2})$) and Theorem~\ref{Lifshitz Auslaufer one dimensional} ($\widetilde{M}=M^{1/2}$),
\begin{align*}
  A \le \, & \lim\limits_{N \to \infty} \dfrac{[\ln(c_1 L_N)]^3}{\beta \pi^2 \nu^2} \cdot \dfrac{1}{\ln(c_2/2)} \E \left[ \mathcal N_N^{\mathrm{I},\omega} \left( \dfrac{4 \pi^2 \nu^2}{ [\ln (L_N)]^{2}} \right)  \right] \\
  \le \, & \lim\limits_{N \to \infty} \dfrac{[\ln(c_1 L_N)]^3}{\beta \pi^2 \nu^2} \cdot \dfrac{1}{\ln(c_2/2)} \dfrac{M}{L_N^{1/2}}  = 0 \ .
\end{align*}
We now turn to line~\eqref{proof BEC type ns LSM 2}: We firstly observe that for all $E \ge [2 \pi \nu / \ln(L_N)]^2$ and $\omega \in \Omega_3(\eta,c_2)$ we have $E \ge [2 \pi \nu / \ln(L_N)]^2 \ge 2 E_N^{1,\omega}$ and hence $E - \mu_N^{\omega} \ge E - E_N^{1,\omega} \ge (1/2) E$ for all but finitely many $N \in \mathds N$. Therefore,
\begin{align*}
B& :=\lim\limits_{\epsilon \searrow 0} \lim\limits_{N \to \infty} \int\limits_{\Omega_3(\eta,c_2)} \left[ \int\limits_{([2 \pi \nu/ \ln (L_N)]^2,\epsilon]}\mathcal{B}(E - \mu_N^{\omega}) \mathcal N_N^{\omega} (\mathrm{d} E) \right] \mathds P(\mathrm{d} \omega) \\
& \le\lim\limits_{\epsilon \searrow 0} \lim\limits_{N \to \infty} \int\limits_{\Omega_3(\eta,c_2)} \left[ \int\limits_{([2 \pi \nu/ \ln (L_N)]^2,\epsilon]} \mathcal B(E/2)\, \mathcal N_N^{\omega} (\mathrm{d} E) \right] \mathds P(\mathrm{d} \omega) \\
& \le \dfrac{2}{\beta}  \lim\limits_{\epsilon \searrow 0} \lim\limits_{N \to \infty} \int\limits_{\Omega_3(\eta,c_2)} \left[ \int\limits_{([2 \pi \nu/ \ln (L_N)]^2,\epsilon]} E^{-1}  \mathcal N_N^{\omega} (\mathrm{d} E) \right] \mathds P(\mathrm{d} \omega) \ .
\end{align*}
Then, an integration by parts (for Lebesgue--Stieltjes integrals, see, e.g., \cite[Theorem 21.67]{hewitt1965real}) and an application of the Fubini--Tonelli theorem yields
\begin{align*}
B & \le \dfrac{2}{\beta}  \lim\limits_{\epsilon \searrow 0} \lim\limits_{N \to \infty} \int\limits_{\Omega_3(\eta,c_2)} \left[ \epsilon^{-1} \mathcal N_N^{\mathrm{I},\omega} (\epsilon) + \int\limits_{[2 \pi \nu / \ln(L_N)]^2}^{\epsilon} \mathcal N_N^{\mathrm{I},\omega} (E) E^{-2} \, \mathrm{d} E \right] \, \mathds P(\mathrm{d} \omega) \\ 
& \le \dfrac{2}{\beta}  \lim\limits_{\epsilon \searrow 0} \lim\limits_{N \to \infty} \Bigg[ \epsilon^{-1} \int\limits_{\Omega_3(\eta,c_2)} \mathcal N_N^{\mathrm{I},\omega} ( \epsilon) \, \mathds P(\mathrm{d} \omega) \, + \\
& \qquad \qquad \qquad \qquad + \, \left. \int\limits_{[2 \pi \nu / \ln(L_N)]^2}^{\epsilon} \left( \, \int\limits_{\Omega_3(\eta,c_2)} \mathcal N_N^{\mathrm{I},\omega} (E) \, \mathds P(\mathrm{d} \omega)  \right) \, E^{-2} \, \mathrm{d} E \right] \\
& \le \dfrac{2}{\beta}  \lim\limits_{\epsilon \searrow 0} \lim\limits_{N \to \infty} \left[ \epsilon^{-1} \E \left[ \mathcal N_N^{\mathrm{I},\omega} ( \epsilon) \right] + \int\limits_{[2 \pi \nu / \ln(L_N)]^2}^{\epsilon} \E \left[ \mathcal N_N^{\mathrm{I},\omega} (E) \right]  E^{-2} \, \mathrm{d} E \right]\ .
\shortintertext{With an $M > 0$ due to Lemma~\ref{lemma E2} $(\mathcal{E}=(\pi\nu)^{-1}\ln(M^{1/2}))$ and Theorem~\ref{Lifshitz Auslaufer one dimensional} ($\widetilde M = M^{1/2}$)}
& \le \dfrac{2 M}{\beta}  \lim\limits_{\epsilon \searrow 0} \lim\limits_{N \to \infty} \left[ \epsilon^{-1} \mathrm{e}^{-\nu \pi \epsilon^{-1/2}} + \int\limits_{[2 \pi \nu / \ln(L_N)]^2}^{\epsilon} \mathrm{e}^{-\nu \pi E^{-1/2}}  E^{-2} \, \mathrm{d} E \right] \\
& \le \dfrac{2 M}{\beta}  \lim\limits_{\epsilon \searrow 0} \lim\limits_{N \to \infty} \left[ \epsilon^{-1} \dfrac{4! \, \epsilon^{2} }{ (\nu \pi)^4 } + \int\limits_{[2 \pi \nu / \ln(L_N)]^2}^{\epsilon} \dfrac{4! \, E^2}{(\nu \pi)^4} E^{-2} \, \mathrm{d} E \right]  \\
& \le \dfrac{2 M}{\beta} \dfrac{4! }{ (\nu \pi)^4 } \lim\limits_{\epsilon \searrow 0} \left[ \epsilon + \epsilon \right] = 0
\end{align*}
	$\mathds{P}$-almost surely. We now turn to line~\eqref{proof BEC type ns LSM 3}: we calculate, using (reverse) Fatou's Lemma, dominated convergence, and Lemma~\ref{Lemma beweis lim int epsilon infty und lim lim int epsilion infty rhoc vage Konvergenz},
	\begin{align*}
	& \lim\limits_{\epsilon \searrow 0} \limsup\limits_{N \to \infty} \int\limits_{\Omega \backslash \Omega_3(\eta,c_2)} \left[ \int_{(0,\epsilon]} \mathcal{B}(E - \mu_N^{\omega})\, \mathcal N_N^{\omega} (\mathrm{d} E) \right] \mathds P(\mathrm{d} \omega) \\
& \quad \le \lim\limits_{\epsilon \searrow 0} \int\limits_{\Omega \backslash \Omega_3(\eta,c_2)} \limsup\limits_{N \to \infty} \left[ \int_{(0,\epsilon]}\mathcal{B}(E - \mu_N^{\omega}) \, \mathcal N_N^{\omega} (\mathrm{d} E) \right] \mathds P(\mathrm{d} \omega) \\
	& \quad = \int\limits_{\Omega \backslash \Omega_3(\eta,c_2)} \lim\limits_{\epsilon \searrow 0} \limsup\limits_{N \to \infty} \left[ \int_{(0,\epsilon]} \mathcal{B}(E - \mu_N^{\omega}) \, \mathcal N_N^{\omega} (\mathrm{d} E) \right] \mathds P(\mathrm{d} \omega) \leq \rho_0(\beta) \eta \ .
	\end{align*}
Altogether we have shown that 
	\begin{align} \label{lim E XN ge 1 eta rho0}
	& \liminf\limits_{N \to \infty} \E \left[ \int\limits_{(0,\widetilde E_N]} \mathcal{B}(E - \mu_N^{\omega}) \, \mathcal N_N^{\omega}(\mathrm{d} E) \right] \ge (1 - \eta) \rho_0(\beta).
	\end{align}
In order to proceed, we define the random variable 
	\begin{align*}
	X_N^{\omega} := (1 - \eta) \rho_0(\beta) - \int\limits_{(0,\widetilde E_N]} \mathcal{B}(E - \mu_N^{\omega})\, \mathcal N_N^{\omega} (\mathrm{d} E) \ .
	\end{align*}
	By Lemma~\ref{Lemma beweis lim int epsilon infty und lim lim int epsilion infty rhoc vage Konvergenz} and Theorem~\ref{Lifshitz Auslaufer one dimensional} we obtain, for some fixed $\epsilon' > 0$ small enough,
	\begin{align*}
	\limsup_{N \to \infty} \int\limits_{(0,\widetilde E_N]} \mathcal{B}(E - \mu_N^{\omega}) \, \mathcal N_N^{\omega} (\mathrm{d} E) & \le \limsup\limits_{N \to \infty} \int\limits_{(0,\epsilon']} \mathcal{B}(E - \mu_N^{\omega}) \, \mathcal N_N^{\omega} (\mathrm{d} E) \\
	& \le \left(1 + \dfrac{\eta}{2} \right) \rho_0(\beta) \ ,
	\end{align*}
	and hence, with $[ x ]_+:=\max\left(x,0\right)$,
    \begin{align*}	 
     \lim\limits_{N \to \infty} \left[ \int\limits_{(0,\widetilde E_N]} \mathcal{B}(E - \mu_N^{\omega}) \, \mathcal N_N^{\omega} (\mathrm{d} E) - \left( 1 + \dfrac{\eta}{2} \right) \rho_0(\beta) \right]_+ = 0
    \end{align*}
	$\mathds{P}$-almost surely. Consequently, 
	\begin{align*}
	\mathds P \left( \omega: \left[ \int\limits_{(0,\widetilde E_N]} \mathcal{B}(E - \mu_N^{\omega}) \, \mathcal N_N^{\omega} (\mathrm{d} E) - \left(1 + \dfrac{\eta}{2} \right) \rho_0(\beta) \right]_+ \le \dfrac{\eta}{2}\rho_0(\beta) \right) \ge 1 - \eta
	\end{align*}
	and therefore
	$\mathds P (\omega: X_N^{\omega} \ge - 2 \eta \rho_0(\beta)) \ge 1 - \eta$ for all but finitely many $N \in \mathds N$. 

	Moreover, $X_N^{\omega} \ge - \rho$ for all $N \in \mathds N$ and $\mathds{P}$-almost all $\omega \in \Omega_3(\eta,c_2)$. Also,~\eqref{lim E XN ge 1 eta rho0} implies $\limsup_{N \to \infty} \E[X_N^{\omega}] \le 0$ and therefore, for all but finitely many $N \in \mathbb{N}$,
	\begin{align*}\begin{split}
	 \eta & \geq \E \left[ X_N^{\omega} \right] 
	=\int\limits_{X_N^{\omega} \ge \eta^{1/2} (1 - \eta) \rho_0(\beta)} X_N^{\omega} \, \mathds P(\mathrm{d} \omega) + \int\limits_{-2 \eta \rho_0(\beta) \le X_N^{\omega} < \eta^{1/2} (1 - \eta) \rho_0(\beta)} X_N^{\omega} \, \mathds P(\mathrm{d} \omega) \\ & \qquad \qquad \qquad \qquad + \int\limits_{X_N^{\omega} < -2 \eta \rho_0(\beta)} X_N^{\omega} \, \mathds P(\mathrm{d} \omega) \\
	 & \geq \eta^{1/2} (1 - \eta) \rho_0(\beta) \, \mathds P\Big( X_N^{\omega} \ge \eta^{1/2} (1 - \eta) \rho_0(\beta) \Big) - 2 \eta \rho_0(\beta) - \eta \rho\ . 
	 \end{split}
	\end{align*}
	Since $0 < \eta \le 1/2$,
	\begin{align*}
	\mathds P \left( \omega: \int\limits_{(0,\widetilde E_N]} \mathcal{B}(E - \mu_N^{\omega}) \, \mathcal N_N^{\omega}(\mathrm{d} E) \le (1 - \eta^{1/2}) (1 - \eta) \rho_0(\beta) \right) \le 4 \dfrac{\rho_0(\beta) + \rho + 1}{\rho_0(\beta)} \eta^{1/2}
	\end{align*}
	for all but finitely many $N \in \mathds N$. 
	
	Hence, in total we have shown that for all $0 < \eta \le 1/2$ there exists an $\widetilde N = \widetilde N(\eta) \in \mathds N$ such that for all $N \ge \widetilde N$ one has 
	\begin{align*}
	& \mathds P \left( \omega: \dfrac{n_N^{1,\omega}}{N}  \ge \dfrac{1}{c_3} \rho^{-1} (1 - \eta^{1/2}) (1 - \eta) \rho_0(\beta) \right) \\
	& \quad \ge \mathds P \left( \left\{ \omega: \dfrac{n_N^{1,\omega}}{N}  \ge \dfrac{1}{c_3} \rho^{-1} (1 - \eta^{1/2}) (1 - \eta) \rho_0(\beta) \right\} \cap \Omega_3(\eta,c_2) \right)\\
	& \quad \ge \mathds P \Bigg( \Big\{ \omega: \int\limits_{(0,\widetilde E_N]} \mathcal{B}(E - \mu_N^{\omega}) \, \mathcal N_N^{\omega}(\mathrm{d} E) \ge  (1 - \eta^{1/2}) (1 - \eta) \rho_0(\beta) \Big\} \cap \Omega_3(\eta,c_2) \Bigg) \\ 
	&\ge 1 - 4 \dfrac{\rho_0(\beta) + \rho + 1}{\rho_0(\beta)} \eta^{1/2} - 6\eta/8 \ .
	\end{align*}
	Note that we used the fact that $n_N^{j,\omega}/N$ is monotonically decreasing in $j$. 
	\end{proof}
Theorem~\ref{LSM ns main theorem} shows that the one-particle ground state is, with strictly positive probability, macroscopically occupied. Regarding excited states we can show the following.
	\begin{cor}\label{CorollaryMinimal}
        Let $M>0$ be given as in Theorem \ref{LSM ns main theorem} and $c_1$ as in Theorem \ref{satz E2}. Then,
	 for all $\eta' > 0$, $0 < \eta \le 1/2$, $c_2 > 2$, and $j \ge c_3:=\lceil 4 M c_2 / (\eta c_1) \rceil + 1$ there exists an $\widetilde N = \widetilde N(\eta', \eta, c_2)$ such that for all $N \ge \widetilde N$ one has
	 \begin{align*}
	  \mathds P \left( \omega: \dfrac{n_N^{j,\omega}}{N} < \eta' \right) \ge \mathds P \left( \omega: \dfrac{n_N^{c_3,\omega}}{N} < \eta' \right) > 1-5\eta/8\,.
	 \end{align*}
	\end{cor}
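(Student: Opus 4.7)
The plan is to reduce the problem to controlling $n_N^{c_3,\omega}/N$ alone and then to show that this quantity tends to zero on the good event provided by Theorem~\ref{satz E2}. The first observation is monotonicity: since the eigenvalues $E_N^{j,\omega}$ are increasing in $j$ and $\mu_N^{\omega}<E_N^{1,\omega}$, the occupation numbers $n_N^{j,\omega}$ are decreasing in $j$. Hence for every $j\ge c_3$ one has $n_N^{j,\omega}\le n_N^{c_3,\omega}$, and the first inequality in the corollary is automatic. It therefore suffices to prove the probability statement for the $c_3$-th state.

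Next, I would work on the event $\Omega_2(\eta,c_2)$ from Theorem~\ref{satz E2}, which has probability at least $1-5\eta/8$ once $N$ is large. On this event we have the upper bound $E_N^{1,\omega}\le(\pi\nu/\ln(c_1L_N))^2$ together with $E_N^{c_3,\omega}\ge\widetilde E_N$, where $\widetilde E_N$ is defined in \eqref{TildeEnergie}. Since $\mu_N^{\omega}<E_N^{1,\omega}$, this gives
\begin{align*}
E_N^{c_3,\omega}-\mu_N^{\omega}\;>\;\widetilde E_N-E_N^{1,\omega}\;\ge\;\frac{\pi^2\nu^2\ln(c_2/2)}{[\ln(c_1L_N)]^3}\,,
\end{align*}
where the last step is exactly the estimate \eqref{Hinweis im beweis von theorem 3.4} already established in the proof of Theorem~\ref{LSM ns main theorem}.

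Using the elementary bound $\mathcal{B}(E)\le(\beta E)^{-1}$ for $E>0$, we obtain
\begin{align*}
n_N^{c_3,\omega}\;=\;\mathcal{B}(E_N^{c_3,\omega}-\mu_N^{\omega})\;\le\;\frac{[\ln(c_1L_N)]^3}{\beta\pi^2\nu^2\ln(c_2/2)}
\end{align*}
on $\Omega_2(\eta,c_2)$. Dividing by $N=\rho L_N$ yields a bound of order $[\ln L_N]^3/L_N$, which tends to zero as $N\to\infty$. Hence for any fixed $\eta'>0$ there exists $\widetilde N=\widetilde N(\eta',\eta,c_2)$ such that for all $N\ge\widetilde N$ the right-hand side is smaller than $\eta'$, so that $\{n_N^{c_3,\omega}/N<\eta'\}\supseteq\Omega_2(\eta,c_2)$. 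Combining with the probability estimate from Theorem~\ref{satz E2} finishes the proof.

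There is no serious obstacle: all the analytic work has already been done inside the proof of Theorem~\ref{LSM ns main theorem}. The only subtle point is to recognise that the logarithmic gap $\widetilde E_N-E_N^{1,\omega}$, although tending to zero, does so much more slowly than $1/N$, which is precisely why the occupation number of the $c_3$-th eigenstate is not macroscopic.
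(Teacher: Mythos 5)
Your proposal is correct and follows essentially the same route as the paper's own proof: restrict to the event $\Omega_2(\eta,c_2)$ from Theorem~\ref{satz E2}, use $\mu_N^\omega < E_N^{1,\omega}$ and the gap estimate from \eqref{Hinweis im beweis von theorem 3.4} to bound $n_N^{c_3,\omega}$ by the elementary Bose-function inequality $\mathcal{B}(E)\le(\beta E)^{-1}$, and then observe that $[\ln L_N]^3/N \to 0$. The only difference is that you spell out the monotonicity of $j\mapsto n_N^{j,\omega}$ explicitly and make the vanishing rate transparent, both of which are present in the paper only as brief remarks.
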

	\begin{proof}
	 Let $\eta' > 0$, $0<\eta<2$ and $c_2>2$ be arbitrary. According to Theorem~\ref{satz E2} there exists an $\widetilde N = \widetilde N(\eta,c_2) \in \mathds N$ such that for all $N \ge \widetilde N$ we have $\mathds P(\Omega_2(\eta,c_2)) > 1 - 5 \eta/8$
	 with $\Omega_2(\eta,c_2) $ as in Theorem~\ref{satz E2}.
	 	
	 Consequently, there exists a number $\widehat{N} \geq \widetilde N$ such that, for all $N \ge \widehat{N}$ and all $\omega \in \Omega_2(\eta,c_2)$ (see also \eqref{Hinweis im beweis von theorem 3.4}) 
  \begin{align*}
   \dfrac{1}{N} n_N^{c_3,\omega} & = \dfrac{1}{N} \left( \mathrm{e}^{\beta (E_N^{c_3,\omega} - \mu_N^{\omega})} - 1 \right)^{-1} \\
   & \le \dfrac{1}{N} \left( \mathrm{e}^{\beta (E_N^{c_3,\omega} - E_N^{1,\omega})} - 1 \right)^{-1} \\
   & \le \dfrac{1}{N} \Big[ \beta (E_N^{c_3,\omega} - E_N^{1,\omega}) \Big]^{-1} 
   < \eta' \ .
  \end{align*}
	\end{proof}
	Most importantly, Theorem~\ref{LSM ns main theorem} implies $\mathds{P}$-almost sure Bose--Einstein condensation into the one-particle ground state as demonstrated by the following statement. 
	\begin{theorem}[Macroscopic occupation of the ground state]\label{MacroscopicOccupationGroundstate}
		The ground state is $\mathds{P}$-almost surely macroscopically occupied, that is, according to Definition \ref{Definition makroskopische Besetzung}, 
		\begin{align*}
		\mathds P \left( \omega: \limsup\limits_{N \to \infty} \dfrac{n_N^{1,\omega}}{N}  > 0 \right) = 1 \ .
		\end{align*}
	\end{theorem}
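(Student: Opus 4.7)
The plan is to bootstrap the in-probability estimate of Theorem~\ref{LSM ns main theorem} into the $\mathds{P}$-almost sure statement by letting the parameter $\eta$ tend to zero and invoking the reverse Fatou inequality $\mathds P(\limsup_N A_N) \ge \limsup_N \mathds P(A_N)$ for a suitable family of events $A_N^\eta$. Uniformly in $N$, the fraction $n_N^{1,\omega}/N$ already lies above a strictly positive deterministic threshold with probability close to $1$; reverse Fatou then transfers ``high probability for each $N$'' into ``high probability for infinitely many $N$'', which is exactly what the $\limsup$ in Definition~\ref{Definition makroskopische Besetzung} demands.

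Concretely, I would fix $c_2 > 2$ and, for each $0 < \eta \le 1/2$, set
\begin{equation*}
\delta_\eta := \dfrac{1}{c_3(\eta)}\, \rho^{-1}(1 - \eta^{1/2})(1 - \eta)\, \rho_0(\beta), \qquad A_N^\eta := \{\omega : n_N^{1,\omega}/N \ge \delta_\eta\}.
\end{equation*}
Under the standing hypothesis $\rho > \rho_c(\beta)$, Lemma~\ref{Lemma beweis lim int epsilon infty und lim lim int epsilion infty rhoc vage Konvergenz} gives $\rho_0(\beta) = \rho - \rho_c(\beta) > 0$, so $\delta_\eta > 0$. Theorem~\ref{LSM ns main theorem} then produces $\widetilde N(\eta, c_2) \in \mathds N$ with $\mathds P(A_N^\eta) \ge 1 - \kappa(\eta)$ for all $N \ge \widetilde N(\eta,c_2)$, where $\kappa(\eta) := 4 [\rho_0(\beta) + \rho + 1]\rho_0(\beta)^{-1} \eta^{1/2} + 6\eta/8$ satisfies $\kappa(\eta) \searrow 0$ as $\eta \searrow 0$.

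Writing $\limsup_N A_N^\eta = \bigcap_k \bigcup_{N \ge k} A_N^\eta$ and using continuity of $\mathds P$ from above yields
\begin{equation*}
\mathds P\Bigl(\limsup\limits_{N \to \infty} A_N^\eta\Bigr) \;\ge\; \limsup\limits_{N \to \infty} \mathds P(A_N^\eta) \;\ge\; 1 - \kappa(\eta).
\end{equation*}
Any $\omega \in \limsup_N A_N^\eta$ satisfies $n_N^{1,\omega}/N \ge \delta_\eta$ for infinitely many $N$, hence $\limsup\limits_{N \to \infty} n_N^{1,\omega}/N \ge \delta_\eta > 0$. Therefore $\mathds P(\omega : \limsup_N n_N^{1,\omega}/N > 0) \ge 1 - \kappa(\eta)$ for every admissible $\eta$; sending $\eta \searrow 0$ along, e.g., $\eta_k = 1/k^2$ closes the argument.

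The main obstacle is essentially absent here: all the probabilistic substance has already been absorbed into the quantitative in-probability bound of Theorem~\ref{LSM ns main theorem}, and the upgrade to an almost-sure statement is a soft postprocessing step. In particular, no Borel--Cantelli argument, no ergodic-theoretic input, and no zero--one law is needed; the only points to verify are the strict positivity of $\delta_\eta$ (which is where $\rho > \rho_c(\beta)$ enters through $\rho_0(\beta)>0$) and the vanishing of the error term $\kappa(\eta)$ as $\eta \searrow 0$, both of which are immediate from the statement of Theorem~\ref{LSM ns main theorem}.
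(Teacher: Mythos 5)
Your argument is correct and is essentially the paper's proof in direct form: the paper phrases it as a contradiction, assuming $\mathds P(\lim_{N\to\infty} n_N^{1,\omega}/N = 0) \ge C > 0$ and then using the same inequality $\limsup_N \mathds P(A_N^\eta) \le \mathds P(\limsup_N A_N^\eta)$ together with Theorem~\ref{LSM ns main theorem} for small $\eta$ to reach $1-C \ge 1 - C/2$, whereas you run the identical chain of inequalities forward and let $\eta \searrow 0$. The only hypothesis to make explicit is $\rho > \rho_c(\beta)$ (inherited from Theorem~\ref{LSM ns main theorem}), which you correctly identify as the source of $\rho_0(\beta) > 0$.
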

	\begin{proof}
		Suppose to the contrary that there exists a $0 < C \le 1$ such that $\mathds P(\lim_{N \to \infty} n_N^{1,\omega} / N = 0) \ge C$. 
		Then, for $0 < \eta \leq 1/2 $, $c_2 > 2$, and $c_3 = c_3(\eta,M)$ as in Theorem~\ref{LSM ns main theorem} we obtain
		\begin{align*}\begin{split}
		& \limsup\limits_{N \to \infty} \mathds{P}\left(\omega:  \dfrac{n_N^{1,\omega}}{N} \ge \dfrac{1}{c_3} \rho^{-1}(1 - \eta^{1/2}) (1 - \eta) \rho_0(\beta)\right) \\
		& \quad \le \mathds P \left( \omega: \limsup\limits_{N \to \infty} \dfrac{n_N^{1,\omega}}{N} \ge \dfrac{1}{c_3} \rho^{-1}(1 - \eta^{1/2}) (1 - \eta) \rho_0(\beta) \right) \\
		& \quad \le \mathds P \left( \omega: \neg \left( \lim\limits_{N \to \infty} \dfrac{n_N^{1,\omega}}{N} = 0 \right) \right) \le 1 - C \ .
		\end{split}
		\end{align*}
		However, due to Theorem~\ref{LSM ns main theorem} there exists an $\widetilde N \in \mathds N$ such that for all $N \ge \widetilde N$ one has
		\begin{align*}
	& \mathds P \left( \omega: \dfrac{n_N^{1,\omega}}{N}  \ge \dfrac{1}{c_3} \rho^{-1}(1 - \eta^{1/2})(1 - \eta) \rho_0(\beta) \right) \\
		& \quad \ge 1 - 4 \dfrac{\rho_0(\beta)+ \rho + 1}{\rho_0(\beta)} \eta^{1/2} - 6\eta/8 \\
		& \quad \ge 1 - \dfrac{C}{2} \ ,
		\end{align*}
		for $\eta$ small enough. 
	\end{proof}

\begin{remark} \label{remark 3.6} Using Theorem \ref{LSM ns main theorem} we can show that
\begin{align*} \mathds E\Big[\dfrac{n_N^{1,\omega}}{N}\Big]& = \mathds E\Big[\dfrac{n_N^{1,\omega}}{N} \mathds{1}\Big(\dfrac{n_N^{1,\omega}}{N} \ge c(\eta)\Big)\Big] + \mathds E\Big[\dfrac{n_N^{1,\omega}}{N} \mathds{1}\Big(\dfrac{n_N^{1,\omega}}{N} < c(\eta)\Big)\Big]\ge c(\eta)(1-\epsilon)\,,
\end{align*}
where $c(\eta) = \dfrac{1}{c_3} \rho^{-1}(1 - \eta^{1/2})(1 - \eta) \rho_0(\beta)$, $\epsilon = 4 \dfrac{\rho_0(\beta) + \rho + 1}{\rho_0(\beta)} \eta^{1/2} + 6\eta/8$, and $\mathds{1}(\cdot)$ the indicator function. This proves that $\liminf_{N\to\infty} \mathds E\big[{n_N^{1,\omega}}/{N}\big] >0$, that is, there is macroscopic occupation of the ground state in expectation.
\end{remark}

\vspace*{0.5cm}

\subsection*{Conclusion}{In this paper we proved, for the first time, that the one-particle ground state in the Luttinger--Sy model with non-infinite interaction strength is macroscopically occupied, see Theorem~\ref{MacroscopicOccupationGroundstate}. This follows from Theorem~\ref{LSM ns main theorem}, where we show that the probability of macroscopic occupation of the ground state is arbitrarily close to $1$, uniformly in the number of particles. It seems plausible but we are not able to prove that only the ground state is macroscopically occupied; this would then yield a type-I BEC. Nevertheless, in Corollary~\ref{CorollaryMinimal} we do prove that highly excited one-particle eigenstates are arbitrarily close to not being occupied with a probability arbitrarily close to $1$. 

\subsection*{Acknowledgement}{It is our pleasure to thank Werner Kirsch and Hajo Leschke for interesting discussions and useful remarks that led to an improvement of the manuscript.}

\appendix
%
%
%
%
\section{Appendix}\label{secAppendix}
In this appendix we collect results for the LS model that we used to establish the results in the previous sections. We start with an estimate between the integrated density of states for finite $N$ and the infinite-volume integrated density of states. 
\begin{lemma}\label{lemma E2}For all $N \in \mathds N$, $E > 0$, and $0 < \mathcal E < E^{-1/2}$ we have
	\begin{align}
	\mathds E \left[ \mathcal N_N^{\mathrm{I},\omega}(E) \right] \le \mathcal N_{\infty}^{\mathrm{I}} ([E^{-1/2} - \mathcal E]^{-2}) \ .
	\end{align}
Moreover, for all $N \in \mathds N$ and Lebesgue-almost all $E \ge 0$ (i.e., all $E \ge 0$ except the discontinuity points of $\mathcal N_{\infty}^{\mathrm{I}}$) one has
	\begin{align}\label{EquationAppendixFirstTheoremSecond}
	\mathds E \left[ \mathcal N_N^{\mathrm{I},\omega}(E) \right] \le \mathcal N_{\infty}^{\mathrm{I}} (E) \ .
	\end{align}
\end{lemma}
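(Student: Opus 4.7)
My plan is to use Dirichlet bracketing together with the stationarity of the Poisson process to compare the finite-volume IDS with the infinite-volume limit $\mathcal N_\infty^{\mathrm I}$. I would prove \eqref{EquationAppendixFirstTheoremSecond} first and then deduce the first inequality from it by interpolating through a continuity point of $\mathcal N_\infty^{\mathrm I}$.

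For any bounded open interval $I \subset \mathds R$ let $N^{D}_{I}(\omega,E)$ denote the number of eigenvalues strictly below $E$ of the Dirichlet restriction of $\mathrm h_\gamma(\omega)$ to $I$; thus $\mathcal N_N^{\mathrm I,\omega}(E) = N^{D}_{\Lambda_N}(\omega,E)/L_N$. Splitting $I = I_1 \cup I_2$ at a common endpoint $x$ which almost surely is not a Poisson point, the form domain $H^{1}_{0}(I_1) \oplus H^{1}_{0}(I_2)$ of $H^{D}_{I_1} \oplus H^{D}_{I_2}$ sits inside $H^{1}_{0}(I)$ with matching form values, so by the min--max principle
\begin{align*}
N^{D}_{I}(\omega,E) \, \geq \, N^{D}_{I_1}(\omega,E) + N^{D}_{I_2}(\omega,E) \, .
\end{align*}
Taking expectations and using translation invariance of the Poisson law, the function $g(L) := \mathds E[N^{D}_{I}(\omega,E)]$ with $|I| = L$ is superadditive. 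Fekete's lemma then yields $g(L)/L \leq \lim_{L \to \infty} g(L)/L = \sup_{L > 0} g(L)/L$, and this limit can be identified with $\mathcal N_\infty^{\mathrm I}(E)$ at every continuity point of $\mathcal N_\infty^{\mathrm I}$ using the almost-sure vague convergence $\mathcal N_N^{\mathrm I,\omega} \to \mathcal N_\infty^{\mathrm I}$ noted in the excerpt together with the uniform a priori bound $\mathcal N_N^{\mathrm I,\omega}(E) \leq \pi^{-1}\sqrt E$ (recorded after Theorem~\ref{Lifshitz Auslaufer one dimensional}), which supplies the uniform integrability needed to pass from vague to expectation convergence. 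This proves \eqref{EquationAppendixFirstTheoremSecond} at every continuity point of $\mathcal N_\infty^{\mathrm I}$, hence Lebesgue-almost everywhere.

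For the first inequality, given $E > 0$ and $0 < \mathcal E < E^{-1/2}$, set $E^{*} := [E^{-1/2} - \mathcal E]^{-2} > E$. Since $\mathcal N_\infty^{\mathrm I}$ is monotone and left-continuous, it has at most countably many discontinuities, and I may pick a continuity point $\tilde E \in (E, E^{*})$. Monotonicity of the counting functions combined with \eqref{EquationAppendixFirstTheoremSecond} then gives
\begin{align*}
\mathds E[\mathcal N_N^{\mathrm I,\omega}(E)] \, \leq \, \mathds E[\mathcal N_N^{\mathrm I,\omega}(\tilde E)] \, \leq \, \mathcal N_\infty^{\mathrm I}(\tilde E) \, \leq \, \mathcal N_\infty^{\mathrm I}(E^{*}) \, .
\end{align*}
The specific form of $E^{*}$ plays no role in this final step; its role is only to package the argument so that it interfaces with the quantitative Lifshitz input used in Theorems~\ref{satz E2} and~\ref{LSM ns main theorem}. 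The only genuinely delicate step I anticipate is the identification of the Fekete limit with the paper's definition of $\mathcal N_\infty^{\mathrm I}(E)$ via $\mathds E[\Theta(E - h_\gamma(\omega))(0,0)]$, i.e.\ showing that the Dirichlet boundary contributions at $\pm L_N/2$ are of lower order than the bulk density; this equivalence is standard in the ergodic theory of random Schr\"odinger operators but needs care in matching conventions.
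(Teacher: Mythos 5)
Your argument is correct and is essentially the paper's own proof: both rest on the superadditivity of the Dirichlet eigenvalue-counting function over translates of $\Lambda_N$, translation invariance of the Poisson process, and the passage from almost-sure vague convergence of $\mathcal N_N^{\omega}$ to convergence of $\mathds E[\mathcal N_N^{\mathrm{I},\omega}(E)]$ at continuity points of $\mathcal N_{\infty}^{\mathrm{I}}$ (justified by the deterministic bound $\mathcal N_N^{\mathrm{I},\omega}(E)\le \pi^{-1}\sqrt{E}$). The only cosmetic differences are that you invoke Fekete's lemma on the expected counting function where the paper applies the strong law of large numbers to the i.i.d.\ translates, and that you prove \eqref{EquationAppendixFirstTheoremSecond} first and recover the first inequality via a continuity point $\tilde E\in(E,E^{*})$, whereas the paper proceeds in the opposite order using the continuous cutoff $f_{E,\mathcal E}$; both devices serve the identical purpose.
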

\begin{proof} In this proof, $\mathcal N_{\Lambda}^{\mathrm{I},\omega}$ denotes the integrated density of states corresponding to the operator $h^{\Lambda}_{\gamma}(\omega)$ on $H^1_0(\Lambda)$, defined analogously to~\eqref{FormalOperatorLSModel}.

In a first step, we realize that, for arbitrary disjoint intervals $\Lambda^1, \Lambda^2 \subset \mathds R$ and for any $E \ge 0$, the inequality 
	\begin{align} \label{akjzsdjk}
	|\Lambda^1|\, \mathcal N_{\Lambda^1}^{\mathrm{I}, \omega}(E) + |\Lambda^2| \, \mathcal N_{\Lambda^2}^{\mathrm{I}, \omega}(E) \le |\Lambda^1 \cup \Lambda^2| \, \mathcal N_{\Lambda^1 \cup \Lambda^2}^{\mathrm{I}, \omega}(E)
	\end{align}
	holds for $\mathds{P}$-almost all $\omega \in \Omega$, see, e.g., \cite[5.39a]{pastur1992spectra}.

	For $M,N \in \mathds{N}$ and $j \in \mathds Z$ we define $\Lambda^{j}_N:=\Lambda_N + L_N \cdot j$ and $\Lambda_{M,N}:=\bigcup_{j \in \mathds Z: |j| \le M} \Lambda_N^j$.
	Note that $\{ \mathcal N_{\Lambda_N^j}^{\mathrm{I},\omega}(E) \}_{j \in \mathds Z}$ is a set of independent and identically distributed random variables for all $E \ge 0$. Hence, employing the strong law of large numbers, inequality \eqref{akjzsdjk}, and introducing the continuous function
	\begin{align*}
	f_{E, \mathcal E} : \mathds R \to \mathds R,\ x \mapsto \begin{cases}
	0 & \text{ if } x \le - \mathcal E \\
	1 + \mathcal E^{-1} x & \text{ if } -\mathcal E < x < 0 \\
	1 & \text{ if } 0 \le x \le E \\
	1 - \dfrac{x-E}{[E^{-1/2} - \mathcal E]^{-2} - E} & \text{ if } E < x < [E^{-1/2} - \mathcal E]^{-2} \\
	0 & \text{ if } x \ge [E^{-1/2} - \mathcal E]^{-2}
	\end{cases}
	\end{align*}
	we find that, for all $E > 0$,
	\begin{align*}
	\mathds E \Big[ \mathcal N_N^{\mathrm{I},\omega}(E) \Big] & = \lim\limits_{M \to \infty} \dfrac{1}{2M+1}\sum\limits_{j \in \mathds Z : |j| \le M} \mathcal N_{\Lambda_N^j}^{\mathrm{I},\omega}(E)\\
	& = \lim\limits_{M \to \infty} \dfrac{1}{|\Lambda_M|} \sum\limits_{j \in \mathds Z : |j| \le M} |\Lambda_N^j| \, \mathcal N_{\Lambda_N^j}^{\mathrm{I},\omega}(E) \\
	& \le \limsup\limits_{M \to \infty} \mathcal N_{\Lambda_{M,N}}^{\mathrm{I},\omega}(E) \le \limsup\limits_{M \to \infty} \mathcal N_{M}^{\mathrm{I},\omega}(E) \\
	& \le \lim\limits_{M \to \infty} \int\limits_{\mathds R} f_{E,\mathcal E}(\widetilde E) \, \mathcal N_M^{\omega}(\mathrm{d} \widetilde E) = \int\limits_{\mathds R} f_{E,\mathcal E}(\widetilde E) \, \mathcal N_{\infty}(\mathrm{d} \widetilde E) \\
	& \le \mathcal N_{\infty}^{\mathrm{I}}([E^{-1/2} - \mathcal E]^{-2}) \ .
	\end{align*}	
%
\eqref{EquationAppendixFirstTheoremSecond} follows from the fact that $\mathcal  N_{\infty}^{\mathrm{I}}$ is monotonically increasing and hence has at most countably many points of discontinuity and by taking the limit $\mathcal{E} \searrow 0$.
\end{proof}
Using this result we obtain a lower bound for the ground-state energy.
  
  \begin{lemma} \label{lower bound ground state energy} For all $\kappa > 2$ and for $\mathds{P}$-almost all $\omega$ there exists an $\widetilde N = \widetilde N(\kappa, \omega) \in \mathds N$ such that for all $N \ge \widetilde N$ we have
  	\begin{align}
  	E_N^{1,\omega} \ge \left( \dfrac{\pi\nu }{\kappa \ln(L_N)} \right)^{2} \ .
  	\end{align}
  \end{lemma}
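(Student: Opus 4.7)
The plan is to convert the lower bound on $E_N^{1,\omega}$ into an upper bound on the probability that there exists any eigenvalue below the threshold, then apply the Lifshitz tail estimate together with Borel--Cantelli.

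First, I would set $E_{\kappa,N} := \bigl(\pi\nu/(\kappa\ln L_N)\bigr)^{2}$ and observe that by definition of $\mathcal N_N^{\mathrm{I},\omega}$ in \eqref{definition finite volume integrated density of states},
\begin{align*}
\{\omega: E_N^{1,\omega} < E_{\kappa,N}\} \subseteq \bigl\{\omega: L_N\,\mathcal N_N^{\mathrm{I},\omega}(E_{\kappa,N}) \ge 1\bigr\}.
\end{align*}
Markov's inequality then gives $\mathds P(E_N^{1,\omega} < E_{\kappa,N}) \le L_N\,\mathds E[\mathcal N_N^{\mathrm{I},\omega}(E_{\kappa,N})]$.

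Second, I would apply Lemma~\ref{lemma E2} with, say, $\mathcal E=1$ (valid once $E_{\kappa,N}^{-1/2}=\kappa\ln(L_N)/(\pi\nu)>1$, which holds for all sufficiently large $N$), yielding
\begin{align*}
\mathds E[\mathcal N_N^{\mathrm{I},\omega}(E_{\kappa,N})] \;\le\; \mathcal N_\infty^{\mathrm{I}}\bigl([E_{\kappa,N}^{-1/2}-1]^{-2}\bigr).
\end{align*}
Since $[E_{\kappa,N}^{-1/2}-1]^{-2}\searrow 0$ as $N\to\infty$, the Lifshitz tail estimate of Theorem~\ref{Lifshitz Auslaufer one dimensional} applies and yields, for all $N$ large enough,
\begin{align*}
\mathcal N_\infty^{\mathrm{I}}\bigl([E_{\kappa,N}^{-1/2}-1]^{-2}\bigr) \;\le\; \widetilde M\,\exp\!\bigl(-\nu\pi[E_{\kappa,N}^{-1/2}-1]\bigr) \;=\; \widetilde M\,\mathrm{e}^{\nu\pi}\,L_N^{-\kappa}.
\end{align*}
Combining the two estimates yields $\mathds P(E_N^{1,\omega} < E_{\kappa,N}) \le C\, L_N^{1-\kappa}$ with a constant $C=C(\kappa)<\infty$.

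Third, since $L_N=N/\rho$ and $\kappa>2$, one has $\sum_{N\in\mathds N} L_N^{1-\kappa}<\infty$, so by the first Borel--Cantelli lemma the event $\{E_N^{1,\omega}<E_{\kappa,N}\}$ occurs for only finitely many $N$, $\mathds P$-almost surely. This produces the desired $\widetilde N(\kappa,\omega)$ beyond which $E_N^{1,\omega}\ge E_{\kappa,N}$ for $\mathds P$-almost all $\omega$. The main (minor) subtlety is the necessity of $\kappa>2$: the factor $L_N$ coming from Markov's inequality eats one power of the polynomial decay $L_N^{-\kappa}$ supplied by the Lifshitz tail, and summability of $L_N^{1-\kappa}$ requires $\kappa-1>1$; this is precisely the hypothesis. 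No step requires more than elementary probability once the Lifshitz bound and Lemma~\ref{lemma E2} are in hand.
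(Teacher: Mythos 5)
Your proof is correct and follows essentially the same route as the paper's: translate the event $\{E_N^{1,\omega} < E_{\kappa,N}\}$ into $\{L_N\,\mathcal N_N^{\mathrm{I},\omega}(E_{\kappa,N}) \ge 1\}$, apply Markov's inequality, bound the expectation via Lemma~\ref{lemma E2} and the Lifshitz tail of Theorem~\ref{Lifshitz Auslaufer one dimensional} to get $O(L_N^{1-\kappa})$, and conclude with Borel--Cantelli, where $\kappa>2$ ensures summability. The only cosmetic difference is that you fix $\mathcal E=1$ while the paper keeps $\mathcal E>0$ arbitrary, which changes nothing.
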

  \begin{proof}
  	Let $\kappa > 2$ be given. We define $\widehat E_N := (\pi\nu  / [ \kappa \ln(L_N)])^{2}$ for all $N \in \mathds N$ with $L_N> 1$, and pick some $\mathcal E > 0$.
  	
  	Then, with Lemma~\ref{lemma E2} and Theorem~\ref{Lifshitz Auslaufer one dimensional} we conclude that for all but finitely many $N \in \mathds N$ one has
  	\begin{align*}
  	\mathds P\left(\omega: |\Lambda_N| \cdot \mathcal N_N^{\omega}(\widehat E_N) \ge 1 \right) & \le L_N \cdot \E \left[ \mathcal N_N^{\mathrm{I}, \omega}(\widehat E_N) \right] \\
  	& \le L_N \cdot \mathcal N^{\mathrm{I}}_{\infty}([\widehat E_N^{-1/2} - \mathcal E]^{-2}) \\
  	& \le \widetilde{M}L_N  \cdot \exp\left( - \pi\nu \left[ \widehat E_N^{-1/2} - \mathcal E \right]  \right) \\
  	& \le \widetilde{M} \mathrm{e}^{\pi \nu  \mathcal E}\cdot L_N^{- \kappa+1} \ .
  	\end{align*}
  	Hence $\sum_{N=1}^{\infty} \mathds P\left(\omega:  L_N\cdot \mathcal N_N^{\omega}(\widehat E_N) \ge 1 \right) < \infty$ and the statement follows from the Borel--Cantelli lemma.
  \end{proof}

\begin{remark}
 Lemma~\ref{lemma E2} and Lemma~\ref{lower bound ground state energy} can readily be generalized to Poisson random potentials of the form
 $$V(\omega, \cdot) = \sum\limits_{j\in\mathds{Z}} u(\cdot - x_j(\omega)) \ ,$$
 with $u \in L^{\infty}(\mathds R)$ having compact support. What we do not have at our disposal is the $O(E^{1/2})$-error bound in the Lifshitz-tail behavior. If we did we could carry over our results to these models.
\end{remark}

Note that the critical density $\rho_c(\beta)$ was defined in~\eqref{EquationCriticalDensity}.

\begin{lemma} \label{finite critical density in LSmodel}
For all $\beta > 0$, the critical density $\rho_c(\beta)$ satisfies
 \begin{align*}
  \rho_c(\beta) = \int\limits_{(0,\infty)} \mathcal{B}(E) \, \mathcal N_{\infty} (\mathrm{d} E) < \infty \ .
 \end{align*}
\end{lemma}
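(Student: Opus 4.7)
My plan is to prove the lemma in two independent pieces: first reducing the supremum over $\mu$ in the defining expression~\eqref{EquationCriticalDensity} to the integral with the chemical potential set to zero, and then verifying finiteness of that integral using the Lifshitz-tail estimate from Theorem~\ref{Lifshitz Auslaufer one dimensional}.

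For the first step, I would observe that the spectrum of $\mathrm{h}_\gamma(\omega)$ is contained in $[0,\infty)$, so $\mathcal N_\infty$ is supported in $[0,\infty)$. Moreover, the Lifshitz-tail estimate gives $\mathcal N_\infty^{\mathrm{I}}(E)\to 0$ as $E\searrow 0$, so $\mathcal N_\infty$ has no atom at $0$. For fixed $E>0$, the map $\mu\mapsto \mathcal B(E-\mu) = (\mathrm e^{\beta(E-\mu)}-1)^{-1}$ is strictly increasing on $(-\infty,0)$. Hence monotone convergence yields
\begin{equation*}
\sup_{\mu\in(-\infty,0)}\int_{\mathds R}\mathcal B(E-\mu)\,\mathcal N_\infty(\mathrm d E)=\lim_{\mu\nearrow 0}\int_{(0,\infty)}\mathcal B(E-\mu)\,\mathcal N_\infty(\mathrm d E)=\int_{(0,\infty)}\mathcal B(E)\,\mathcal N_\infty(\mathrm d E),
\end{equation*}
where the absence of an atom at $E=0$ ensures the right-hand side is unambiguously defined.

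For the second step (finiteness), I would fix some small $\widetilde E>0$ smaller than the constant appearing in Theorem~\ref{Lifshitz Auslaufer one dimensional} and split the integral as $\int_{(0,\widetilde E]} + \int_{(\widetilde E,\infty)}$. On the high-energy part, I use that $\mathcal B(E)\le \mathrm e^{-\beta E/2}$ for $E$ large together with $\mathcal N_\infty^{\mathrm I}(E)\le cE^{1/2}$ from~\eqref{UpperEstimateIntegratedDensityStates}, and an integration by parts for Lebesgue--Stieltjes integrals gives a finite contribution by standard estimates on $\int_{\widetilde E}^\infty \mathrm e^{-\beta E/2} E^{-1/2}\,\mathrm d E$.

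On the low-energy part, I use $\mathcal B(E)\le 1/(\beta E)$ for $E>0$ and, again by an integration by parts,
\begin{equation*}
\int_{(0,\widetilde E]} \frac{1}{\beta E}\,\mathcal N_\infty(\mathrm d E) = \frac{1}{\beta \widetilde E}\mathcal N_\infty^{\mathrm I}(\widetilde E) + \frac{1}{\beta}\int_0^{\widetilde E} E^{-2}\,\mathcal N_\infty^{\mathrm I}(E)\,\mathrm d E.
\end{equation*}
Inserting the Lifshitz bound $\mathcal N_\infty^{\mathrm I}(E)\le \widetilde M\,\mathrm e^{-\nu\pi E^{-1/2}}$ makes the remaining integrand $E^{-2}\mathrm e^{-\nu\pi E^{-1/2}}$, which is integrable at $0$ since the exponential decays faster than any polynomial as $E\searrow 0$ (any crude bound such as $\mathrm e^{-\nu\pi E^{-1/2}}\le C_k E^k$ for large $k$ suffices). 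Combining the two pieces yields $\rho_c(\beta)<\infty$.

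The only genuinely delicate point is the low-energy integrability: without the Lifshitz tail one would only have polynomial control like $\mathcal N_\infty^{\mathrm I}(E)\le cE^{1/2}$, which together with the $1/E$ singularity from $\mathcal B$ would not be enough. The superexponential vanishing of $\mathcal N_\infty^{\mathrm I}$ at $0$ coming from Theorem~\ref{Lifshitz Auslaufer one dimensional} is precisely what rescues the argument, so the core of the proof is reducing the problem to this input.
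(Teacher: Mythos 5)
Your proof is correct and follows essentially the same route as the paper's: split the integral, integrate by parts in the Lebesgue--Stieltjes sense, use the Lifshitz-tail bound $\mathcal N_{\infty}^{\mathrm{I}}(E)\le \widetilde M\mathrm{e}^{-\nu\pi E^{-1/2}}$ to kill the $1/E$ singularity at the origin, and use $\mathcal N_{\infty}^{\mathrm{I}}(E)\le cE^{1/2}$ together with the exponential decay of $\mathcal B$ at infinity. Two minor remarks: your use of monotone convergence for the supremum over $\mu$ is a slight (and arguably cleaner) variation on the paper's dominated-convergence argument, since it does not presuppose finiteness; and your two-piece split leaves out the intermediate range where neither $\mathcal B(E)\le \mathrm{e}^{-\beta E/2}$ nor the Lifshitz regime applies --- this is the paper's middle interval $(\widetilde E_1,\widetilde E_2]$, whose contribution is trivially bounded by $\mathcal B(\widetilde E_1)\,\mathcal N_{\infty}^{\mathrm{I}}(\widetilde E_2)<\infty$, so the omission is harmless but should be stated.
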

 \begin{proof}
In a first step we show that
\begin{align}\label{EquationProofRelationXXXXX}
 \int\limits_{\mathds R} \mathcal{B}(E) \, \mathcal N_{\infty} (\mathrm{d} E) = \int\limits_{(0,\infty)} \mathcal{B}(E) \, \mathcal N_{\infty} (\mathrm{d} E)
\end{align}
%
is finite. In order to do this, we choose $\widetilde E_1: = \widetilde E$ with  $\widetilde E > 0$ as in Theorem~\ref{Lifshitz Auslaufer one dimensional}. Moreover, we choose $\widetilde E_2 > \widetilde E_1$ such that for $E \ge \widetilde E_2$ one has
\begin{align} \label{Ungleichung Appendix 2 kritische Dichte Energieschranke 1}
\mathcal{B}(E) \le \left( 2^{-1/2} \mathrm{e}^{\beta E}  \right)^{-1}
\end{align}
and
\begin{align} \label{Ungleichung Appendix 2 kritische Dichte Energieschranke 2}
 \mathcal N_{\infty}^{\mathrm{I}}(E) \le c E^{1/2} 
\end{align}
with $c > 0$ as in Theorem~\ref{Lifshitz Auslaufer one dimensional}. We write,
\begin{align} 
 \int\limits_{(0,\infty)} \mathcal{B}(E) \, \mathcal N_{\infty} (\mathrm{d} E) & = \int\limits_{(0,\widetilde E_1]} \mathcal{B}(E) \, \mathcal N_{\infty} (\mathrm{d} E)  \label{Appendix proof 2 critical density align 1 1} \\
 & + \int\limits_{(\widetilde E_1,\widetilde E_2]}\mathcal{B}(E) \, \mathcal N_{\infty} (\mathrm{d} E) \label{Appendix proof 2 critical density align 1 2} \\ 
 & + \int\limits_{(\widetilde E_2,\infty)} \mathcal{B}(E) \, \mathcal N_{\infty} (\mathrm{d} E) \label{Appendix proof 2 critical density align 1 3}
 \end{align}
  The term in line~\eqref{Appendix proof 2 critical density align 1 2} is finite since
 \begin{align*}
   \int\limits_{(\widetilde E_1,\widetilde E_2]} \mathcal{B}(E) \, \mathcal N_{\infty} (\mathrm{d} E) \le \mathcal B(\widetilde E_1)\, \mathcal N_{\infty}^{\mathrm{I}} (\widetilde E_2) \le  c\,\mathcal B(\widetilde E_1)\, \widetilde E_2^{1/2} < \infty \ ,
 \end{align*}
 see also \eqref{Ungleichung Appendix 2 kritische Dichte Energieschranke 2}.
 
In the following, we write $\mathcal N_{\infty}^{\mathrm{I}}(E+)$ for the right-sided limit of $\mathcal N_{\infty}^{\mathrm{I}}$ at $E$; recall that $\mathcal N_{\infty}^{\mathrm{I}}$ is left-continuous.

For the term in line \eqref{Appendix proof 2 critical density align 1 1} we get, using integration by parts, see, e.g., \cite[Theorem 21.67]{hewitt1965real}, and the fact that $\mathcal N_{\infty}^{\mathrm{I}}$ is non-decreasing,
 \begin{align*}
 A &:= \lim\limits_{\epsilon_1 \searrow 0} \int\limits_{(\epsilon_1,\widetilde E_1]} \mathcal{B}(E) \, \mathcal N_{\infty} (\mathrm{d} E)  \\
 & \,\le \lim\limits_{\epsilon_1 \searrow 0} \left\{ \mathcal B(\widetilde E_1) \, \mathcal N_{\infty}^{\mathrm{I}} (\widetilde E_1+) + \int\limits_{\epsilon_1}^{\widetilde E_1} \mathcal N_{\infty}^{\mathrm{I}} (E) \big[\mathcal B(E)\big]^2\, \beta \mathrm{e}^{\beta E} \, \mathrm{d} E \right\}\\
 & \,\le \mathcal B(\widetilde E_1)\, \mathcal N_{\infty}^{\mathrm{I}} (2 \widetilde E_1) + \int\limits_{0}^{\widetilde E_1} \mathcal N_{\infty}^{\mathrm{I}} (E) \big[\mathcal B(E)\big]^2\, \beta \mathrm{e}^{\beta E} \, \mathrm{d} E \\
    \shortintertext{and by Theorem~\ref{Lifshitz Auslaufer one dimensional},}
 & \le \mathcal B(\widetilde E_1)\, \mathcal N_{\infty}^{\mathrm{I}} (2 \widetilde E_1) + \int\limits_{0}^{\widetilde E_1} \widetilde M \mathrm{e}^{- \nu \pi E^{-1/2}} (\beta E)^{-2} \beta \mathrm{e}^{\beta E} \, \mathrm{d} E \\
 & \le \mathcal B(\widetilde E_1)\, \mathcal N_{\infty}^{\mathrm{I}} (2 \widetilde E_1) + \widetilde M \beta^{-1}  \mathrm{e}^{\beta \widetilde E_1} \int\limits_{0}^{\widetilde E_1} \dfrac{4! E^2}{(\nu \pi)^4} E^{-2} \, \mathrm{d} E < \infty \ .
 \end{align*}
  As a last step, we show that also the term in line \eqref{Appendix proof 2 critical density align 1 3} is finite. We obtain
 \begin{align*}
 C & := \lim\limits_{\epsilon_2 \to \infty}  \int\limits_{(\widetilde E_2,\epsilon_2]} \mathcal{B}(E) \, \mathcal N_{\infty} (\mathrm{d} E) \\
   & \, \le \lim\limits_{\epsilon_2 \to \infty} \left[ \mathcal B(\epsilon_2)\, \mathcal N_{\infty}^{\mathrm{I}} (\epsilon_2+) + \int\limits_{\widetilde E_2}^{\epsilon_2} \mathcal N_{\infty}^{\mathrm{I}} (E) \big[\mathcal B(E)\big]^2\, \beta \mathrm{e}^{\beta E} \, \mathrm{d} E \right] 
\end{align*}
by using integration by parts. Since $\mathcal N_{\infty}^{\mathrm{I}}$ is non-decreasing,
\begin{align*} C& \, \le \lim\limits_{\epsilon_2 \to \infty} \left[ \mathcal B(\epsilon_2)\, \mathcal N_{\infty}^{\mathrm{I}} (2\epsilon_2) + \int\limits_{\widetilde E_2}^{\epsilon_2} \mathcal N_{\infty}^{\mathrm{I}} (E) \big[\mathcal B(E)\big]^2\, \beta \mathrm{e}^{\beta E} \, \mathrm{d} E \right] \\
  \shortintertext{and using \eqref{Ungleichung Appendix 2 kritische Dichte Energieschranke 1} and \eqref{Ungleichung Appendix 2 kritische Dichte Energieschranke 2},} 
  & \, \le \lim\limits_{\epsilon_2 \to \infty} \left[ 2 c \, \mathrm{e}^{-\beta  \epsilon_2} \epsilon_2^{1/2} + 2 c \beta \, \int\limits_{\widetilde E_2}^{\epsilon_2} E^{1/2} \mathrm{e}^{-2\beta E } \mathrm{e}^{\beta E} \, \mathrm{d} E \right] < \infty \ .
  \end{align*}
Altogether we have by now proved~\eqref{EquationProofRelationXXXXX}. 

The final statement about $\rho_c(\beta)$ then follows, using again dominated convergence, by
 \begin{align*}
  \rho_c(\beta) & = \sup\limits_{\mu \in (-\infty,0)} \left\{ \int\limits_{\mathds R} \mathcal{B}(E - \mu) \, \mathcal N_{\infty}(\mathrm{d} E) \right\} \\
  & = \sup\limits_{\mu \in (-\infty,0)} \left\{ \int\limits_{(0,\infty)} \mathcal{B}(E - \mu) \, \mathcal N_{\infty}(\mathrm{d} E) \right\} \\
  & = \lim\limits_{\mu \to 0^{-}} \int\limits_{(0,\infty)} \mathcal{B}(E - \mu) \, \mathcal N_{\infty}(\mathrm{d} E) \\
  & = \int\limits_{(0,\infty)} \mathcal{B}(E - \mu) \, \mathcal N_{\infty}(\mathrm{d} E) < \infty\ .
 \end{align*}
 \end{proof}
The statement in the next lemma is not trivial since we have only vague convergence of the density of states and not weak convergence. 
 \begin{lemma} \label{Lemma wegen vager statt schwacher Konvergenz}
  For all $\mu < 0$ we have $\mathds{P}$-almost surely
  \begin{align} \label{equation Lemma wegen vager statt schwacher Konvergenz}
\lim\limits_{N \to \infty} \int\limits_{(0,\infty)} \mathcal{B}(E - \mu) \, \mathcal N_N^{\omega}(\mathrm{d} E) = \int\limits_{(0,\infty)} \mathcal{B}(E - \mu) \, \mathcal N_{\infty}(\mathrm{d} E) \ .    
  \end{align}
 \end{lemma}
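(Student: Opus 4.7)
The plan is to leverage the vague convergence $\mathcal N_N^\omega\to\mathcal N_\infty$ together with the fact that, since $\mu<0$, the Bose factor $\mathcal B(\cdot-\mu)$ is continuous and bounded on $[0,\infty)$ and decays exponentially at infinity, while both $\mathcal N_N^{\mathrm{I},\omega}$ and $\mathcal N_\infty^{\mathrm{I}}$ grow at most like $\sqrt{E}$. Concretely, I fix a truncation radius $R>0$, use vague convergence on a continuous compactly supported approximation of $\mathcal B(\cdot-\mu)$ supported in $[-1,R+1]$, and separately control the tail contribution on $(R,\infty)$ uniformly in $N$ on an event of full probability. A standard $\varepsilon/3$ argument, sending $N\to\infty$ first and then $R\to\infty$, then yields~\eqref{equation Lemma wegen vager statt schwacher Konvergenz}.

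The tail control is the crucial ingredient, and rests on two uniform estimates: the $\mathds P$-almost sure bound $\mathcal N_N^{\mathrm{I},\omega}(E)\le\pi^{-1}\sqrt{E}$ valid for every $N$ and $E\ge0$ (stated in the remark following Theorem~\ref{Lifshitz Auslaufer one dimensional}) and the limiting bound $\mathcal N_\infty^{\mathrm{I}}(E)\le c\sqrt{E}$ from~\eqref{UpperEstimateIntegratedDensityStates}. Integration by parts for Lebesgue--Stieltjes integrals (as used in the proof of Lemma~\ref{finite critical density in LSmodel}) gives, for any $R>0$,
\begin{align*}
\int\limits_{(R,\infty)}\mathcal B(E-\mu)\,\mathcal N_N^\omega(\ud E)\le \mathcal B(R-\mu)\,\mathcal N_N^{\mathrm{I},\omega}(R+)+\int\limits_R^\infty\bigl|\mathcal B'(E-\mu)\bigr|\,\mathcal N_N^{\mathrm{I},\omega}(E)\,\ud E,
\end{align*}
with the analogous estimate for $\mathcal N_\infty$. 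Because $|\mathcal B(x)|$ and $|\mathcal B'(x)|$ are both bounded by $C\ue^{-\beta x}$ for large $x$, and $\mu<0$, both right-hand sides are dominated by $C(\mu,\beta)(\sqrt{R}\,\ue^{-\beta R}+\int_R^\infty\ue^{-\beta E}\sqrt{E}\,\ud E)$, which vanishes as $R\to\infty$ uniformly in $N$ on the full-probability event where the polynomial bound on $\mathcal N_N^{\mathrm{I},\omega}$ holds.

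For the bounded portion, I pick a continuous cutoff $\chi_R$ with values in $[0,1]$, equal to $1$ on $[0,R]$ and supported in $[-1,R+1]$, and set $f_R:=\mathcal B(\cdot-\mu)\chi_R\in C_c(\mathds R)$. Vague convergence delivers $\int f_R\,\ud\mathcal N_N^\omega\to\int f_R\,\ud\mathcal N_\infty$ $\mathds P$-almost surely, while the discrepancies $\int_{(0,\infty)}\mathcal B(E-\mu)\,\mathcal N_N^\omega(\ud E)-\int f_R\,\ud\mathcal N_N^\omega$ and its $\mathcal N_\infty$-analogue are each bounded by the tail integral over $(R,\infty)$ treated in the previous paragraph. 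Combining the three contributions produces a $\mathds P$-null exceptional set (a countable union over $R\in\mathds N$ of the exceptional sets of the polynomial bound and of vague convergence against each $f_R$) off which the claimed limit holds.

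The main obstacle, as the preceding remark emphasizes, is precisely the tail: vague convergence alone cannot see mass escaping to infinity, so the whole argument hinges on having a deterministic polynomial majorant of $\mathcal N_N^{\mathrm{I},\omega}$ that is uniform in $N$; once this is in hand via the comparison with the free integrated density of states, everything else is bookkeeping of the $\varepsilon/3$ splittings.
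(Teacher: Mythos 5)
Your proposal is correct and follows essentially the same route as the paper's proof: vague convergence tested against continuous compactly supported cutoff approximations of $\mathcal B(\cdot-\mu)$ on a bounded energy window, combined with a tail estimate on $(R,\infty)$ obtained by Lebesgue--Stieltjes integration by parts and the deterministic majorant $\mathcal N_N^{\mathrm{I},\omega}(E)\le\mathcal N_N^{\mathrm{I},(0)}(E)\le\pi^{-1}\sqrt{E}$ coming from comparison with the free Hamiltonian. You correctly identify the uniform-in-$N$ control of the escaping mass as the one genuinely nontrivial ingredient, which is exactly the point of the remark preceding the lemma.
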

\begin{proof}
 Let $\mu < 0$ be given. Then, for all $E_2 > 0$ we get
 \begin{align*}
 \begin{split}
 \limsup\limits_{N \to \infty} \int\limits_{(0,\infty)} &\mathcal{B}(E - \mu) \, \mathcal N_N^{\omega}(\mathrm{d} E) \\
 \le \, & \limsup\limits_{N \to \infty} \int\limits_{(0,E_2]} \mathcal{B}(E - \mu) \, \mathcal N_N^{\omega}(\mathrm{d} E) + \limsup\limits_{N \to \infty} \int\limits_{[E_2,\infty)} \mathcal{B}(E - \mu) \, \mathcal N_N^{\omega}(\mathrm{d} E) \ .
 \end{split}
 \end{align*}
 For $0 < \mathcal E < - \mu$ we define the real function $g_{\mathcal E, E_2}$ as
 \begin{align} \label{definition g mathcal E E_2 E}
 g_{\mathcal E, E_2} (E) := \begin{cases}
                        0 & \quad \text{ if } E \le - \mathcal E \\
                        1 + \mathcal E^{-1} E & \quad \text{ if } - \mathcal E < E < 0 \\
                        1 & \quad \text{ if } 0 \le E \le E_2 \\
                        1 - \dfrac{E - E_2}{\mathcal E} & \quad \text{ if } E_2 < E < E_2 + \mathcal E \\
                        0 & \quad \text{ if } E > E_2 + \mathcal E
                       \end{cases}\ .
\end{align}
For the first integral we $\mathds{P}$-almost surely obtain, by $\mathds{P}$-almost sure vague convergence,
 \begin{align*}
   \limsup\limits_{N \to \infty} \int\limits_{(0,E_2]} &\mathcal{B}(E - \mu) \, \mathcal N_N^{\omega}(\mathrm{d} E) \\
  \le \, & \limsup\limits_{N \to \infty} \int\limits_{\mathds R} g_{\mathcal E, E_2} (E) \mathcal{B}(E - \mu) \, \mathcal N_N^{\omega}(\mathrm{d} E) \\
  \le \, & \mathcal{B}(E_2 - \mu) \mathcal N_{\infty}^{\mathrm{I}}(E_2 + \mathcal E)+ \int\limits_{(0,E_2]}  \mathcal{B}(E - \mu) \mathcal N_{\infty}(\mathrm{d} E) \ .
 \end{align*}
Furthermore, 
 \begin{align*}
   \limsup\limits_{N \to \infty} &\int\limits_{[E_2,\infty)}\mathcal{B}(E - \mu) \, \mathcal N_N^{\omega}(\mathrm{d} E) \le \limsup\limits_{N \to \infty} \lim\limits_{\epsilon_2 \to \infty} \int\limits_{[E_2,\epsilon_2]} \mathcal{B}(E - \mu) \, \mathcal N_N^{\omega}(\mathrm{d} E)\ , \\
  \shortintertext{and integrating by parts gives}
  \le \, & \limsup\limits_{N \to \infty} \lim\limits_{\epsilon_2 \to \infty} \left[ \mathcal{B}(\epsilon_2- \mu) \mathcal N_N^{\mathrm{I},\omega}(\epsilon_2) +  \beta\int\limits_{E_2}^{\epsilon_2} \mathcal N_N^{\mathrm{I},\omega}(E) \big[\mathcal B(E - \mu)\big]^2\, \mathrm{e}^{\beta (E - \mu)}  \, \mathrm{d} E\right] \ .
  \end{align*}
If we denote with $\mathcal N_N^{\mathrm{I},(0)}$ the integrated density of states of the free Hamiltonian $-\ud^2/\ud x^2$ on $H^1_0(\Lambda_N)$ then we can further bound this by
  \begin{align*}
  \le \, & \limsup\limits_{N \to \infty} \lim\limits_{\epsilon_2 \to \infty} \left[ \mathcal{B}(\epsilon_2 - \mu) \mathcal N_N^{\mathrm{I},(0)}(\epsilon_2) + \beta\int\limits_{E_2}^{\epsilon_2} \mathcal N_N^{\mathrm{I},(0)}(E) \big[\mathcal B(E - \mu)\big]^2\, \mathrm{e}^{\beta (E - \mu)}   \, \mathrm{d} E\right]\ .
  \end{align*}
Since $\mathcal N_N^{\mathrm{I},(0)}(E) \le \pi^{-1} E^{1/2}$ for all $E \ge 0$ and all $N \in \mathds N$ we get
  \begin{align*}
  = \, & \lim\limits_{\epsilon_2 \to \infty} \left[ \pi^{-1} \mathcal{B}(\epsilon_2 - \mu) \epsilon_2^{1/2} + \beta\pi^{-1}\int\limits_{E_2}^{\epsilon_2}  E^{1/2} \big[\mathcal B(E - \mu)\big]^2\, \mathrm{e}^{\beta (E - \mu)}   \, \mathrm{d} E \right]\\
  = \, & \beta \pi^{-1}\int\limits_{E_2}^{\infty}  E^{1/2} \big[\mathcal B(E - \mu)\big]^2\, \mathrm{e}^{\beta (E - \mu)}  \, \mathrm{d} E \ .
 \end{align*}
Hence, in total we obtain $\mathds{P}$-almost surely
\begin{align*}
 \limsup\limits_{N \to \infty} &\int\limits_{(0,\infty)} \mathcal{B}(E - \mu) \, \mathcal N_N^{\omega}(\mathrm{d} E) \\
\le \, & \lim\limits_{E_2 \to \infty} \mathcal{B}(E_2 - \mu) \mathcal N_{\infty}^{\mathrm{I}}(E_2 + \mathcal E)+\lim\limits_{E_2 \to \infty} \int\limits_{(0,E_2]} \mathcal{B}(E - \mu) \mathcal N_{\infty}(\mathrm{d} E)  \\
& + \, \beta \pi^{-1} \lim\limits_{E_2 \to \infty}  \int\limits_{E_2}^{\infty}  E^{1/2} \big[\mathcal B(E - \mu)\big]^2\, \mathrm{e}^{\beta (E - \mu)}  \, \mathrm{d} E \\
= \, & \int\limits_{(0,\infty)}  \mathcal{B}(E - \mu) \mathcal N_{\infty}(\mathrm{d} E) \ ,
\end{align*}
where we also used Theorem~\ref{Lifshitz Auslaufer one dimensional}.

On the other hand, for all $E_2 > 0$ $\mathds{P}$-almost surely, 
 \begin{align*}
 \liminf\limits_{N \to \infty} \int\limits_{(0,\infty)} \mathcal{B}(E - \mu) \, \mathcal N_N^{\omega}(\mathrm{d} E) & \ge \liminf\limits_{N \to \infty} \int\limits_{\mathds R} g_{\mathcal E, E_2}(E) \mathcal{B}(E - \mu) \, \mathcal N_N^{\omega}(\mathrm{d} E) \\
 & = \int\limits_{\mathds R} g_{\mathcal E, E_2}(E) \mathcal{B}(E - \mu) \, \mathcal N_{\infty}(\mathrm{d} E) \\
 & \ge \int\limits_{(0,E_2]} \mathcal{B}(E - \mu) \, \mathcal N_{\infty}(\mathrm{d} E) \ .
 \end{align*}
Thus, $\mathds{P}$-almost surely,
\begin{align*}
 \liminf\limits_{N \to \infty} \int\limits_{(0,\infty)} \mathcal{B}(E - \mu) \, \mathcal N_N^{\omega}(\mathrm{d} E) & \ge \lim\limits_{E_2 \to \infty} \int\limits_{(0,E_2]} \mathcal{B}(E - \mu) \, \mathcal N_{\infty}(\mathrm{d} E) \,
 =\, \int\limits_{(0,\infty)} \mathcal{B}(E - \mu)\, \mathcal N_{\infty}(\mathrm{d} E) \ .
\end{align*}
\end{proof}
The next lemma is taken from \cite{lenoble2004bose}. We follow in parts their proof. 
 \begin{lemma} \label{Satz Konvergenzverhalten mu}
If $\rho < \rho_c(\beta)$, then $\mu^{\omega}_{N}$ converges $\mathds{P}$-almost surely to a non-random limit point $\widehat \mu < 0$. On the other hand, if $\rho \geq \rho_c(\beta)$, then $\mu^{\omega}_{N}$ converges $\mathds{P}$-almost surely to $0$.
\end{lemma}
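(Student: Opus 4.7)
The plan is to study the finite-volume and limiting \emph{filling functions} $F_N^\omega(\mu) := \int_{(0,\infty)} \mathcal B(E-\mu)\, \mathcal N_N^\omega(\mathrm d E)$ and $F(\mu) := \int_{(0,\infty)} \mathcal B(E-\mu)\, \mathcal N_\infty(\mathrm d E)$. The defining equation \eqref{Gleichung Bedingung mu aequivalent} reads $F_N^\omega(\mu_N^\omega) = \rho$, while $F$ is continuous and strictly increasing on $(-\infty,0)$, with $F(\mu)\to 0$ as $\mu\to -\infty$ (dominated convergence) and $F(\mu)\to \rho_c(\beta)$ as $\mu\to 0^-$ (monotone convergence together with Lemma~\ref{finite critical density in LSmodel}). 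Strict monotonicity uses $\mathcal N_\infty((0,\infty))>0$, which follows from Theorem~\ref{Lifshitz Auslaufer one dimensional} and the fact that zero lies in the $\mathds P$-almost sure spectrum of $\mathrm{h}_\gamma(\omega)$. Consequently, if $\rho<\rho_c(\beta)$ there is a unique $\widehat\mu\in(-\infty,0)$ with $F(\widehat\mu)=\rho$.

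By applying Lemma~\ref{Lemma wegen vager statt schwacher Konvergenz} to a countable dense subset of $(-\infty,0)$ and using monotonicity in $\mu$ of $F_N^\omega$ and $F$ plus continuity of $F$, one obtains a $\mathds P$-full event $\Omega_0$ on which $F_N^\omega(\mu)\to F(\mu)$ for every $\mu<0$. The central device is the following monotonicity dichotomy on $\Omega_0$: for any $\epsilon>0$, if $\mu_N^\omega\ge -\epsilon$ along a subsequence then $\rho = F_N^\omega(\mu_N^\omega)\ge F_N^\omega(-\epsilon)\to F(-\epsilon)$, hence $\rho\ge F(-\epsilon)$; conversely, $\mu_N^\omega\le -\epsilon$ along a subsequence forces $\rho\le F(-\epsilon)$.

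In the subcritical case $\rho<\rho_c(\beta)$ one runs this dichotomy twice. For $\epsilon\in(0,|\widehat\mu|)$ one has $F(-\epsilon)>\rho$, which rules out $\mu_N^\omega\ge -\epsilon$ infinitely often and yields $\limsup \mu_N^\omega\le -\epsilon$; letting $\epsilon\nearrow |\widehat\mu|$ gives $\limsup \mu_N^\omega\le \widehat\mu$. For $\epsilon>|\widehat\mu|$ one has $F(-\epsilon)<\rho$, ruling out $\mu_N^\omega\le -\epsilon$ infinitely often and giving $\liminf \mu_N^\omega\ge -\epsilon$; letting $\epsilon\searrow |\widehat\mu|$ yields $\liminf \mu_N^\omega\ge \widehat\mu$. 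Hence $\mu_N^\omega\to \widehat\mu$ on $\Omega_0$. In the critical/supercritical case $\rho\ge \rho_c(\beta)$ one has $F(-\epsilon)<\rho_c(\beta)\le \rho$ for every $\epsilon>0$, so the second half of the dichotomy yields $\liminf \mu_N^\omega\ge -\epsilon$ for every $\epsilon>0$, i.e., $\liminf \mu_N^\omega\ge 0$.

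The remaining and, I expect, principal obstacle is the matching upper bound $\limsup \mu_N^\omega\le 0$ almost surely in the $\rho\ge \rho_c(\beta)$ case. Since $\mu_N^\omega<E_N^{1,\omega}$ by construction, it suffices to show $E_N^{1,\omega}\to 0$ $\mathds P$-almost surely. Theorem~\ref{satz E2} provides such a bound only in probability, so I would upgrade it to an almost sure statement by a direct Borel--Cantelli argument on the length of the largest gap between consecutive Poisson atoms inside $\Lambda_N$: this gap has length of order $\nu^{-1}\log(L_N)$ $\mathds P$-almost surely, and inserting a Dirichlet test function supported on such a gap into the Rayleigh--Ritz variational principle for $\mathrm h_\gamma^N(\omega)$ (which is licit because the positive $\delta$-potentials can only raise the energy when test functions vanish at the atoms) gives $E_N^{1,\omega}=O((\log L_N)^{-2})$ almost surely, closing the argument.
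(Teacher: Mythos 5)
Your proposal is correct, and it reorganizes the argument in a way that is genuinely leaner than the paper's. Both proofs ultimately rest on the same two ingredients: the $\mathds{P}$-almost sure convergence $\int_{(0,\infty)}\mathcal B(E-\mu)\,\mathcal N_N^{\omega}(\mathrm d E)\to\int_{(0,\infty)}\mathcal B(E-\mu)\,\mathcal N_{\infty}(\mathrm d E)$ for fixed $\mu<0$ (Lemma~\ref{Lemma wegen vager statt schwacher Konvergenz}), and the fact that $E_N^{1,\omega}\to 0$ $\mathds{P}$-almost surely, which yields $\limsup_N\mu_N^{\omega}\le 0$. The paper, however, first derives an explicit lower bound on $\mu_N^{\omega}$ via the one-particle partition function $\phi_N^{\omega}(\beta)$ to secure the existence of accumulation points, and then argues separately, by contradiction, that in the subcritical case $0$ is not an accumulation point and every negative accumulation point solves the limiting equation (hence equals $\widehat\mu$ by strict monotonicity), and that in the supercritical case no negative accumulation point can exist. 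Your monotonicity dichotomy --- $\mu_{N_j}^{\omega}\ge-\epsilon$ along a subsequence forces $\rho\ge F(-\epsilon)$, and $\mu_{N_j}^{\omega}\le-\epsilon$ forces $\rho\le F(-\epsilon)$ --- controls $\limsup$ and $\liminf$ directly, makes the lower boundedness of $\mu_N^{\omega}$ automatic (take $\epsilon$ large in the second half), and dispenses with subsequence extraction entirely; the upgrade of Lemma~\ref{Lemma wegen vager statt schwacher Konvergenz} to a full-measure event valid for all $\mu<0$ simultaneously (countable dense set plus monotonicity and continuity of $F$) is sound, though for your purposes convergence at countably many values $-\epsilon$ would already suffice. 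The one point you flag as an obstacle --- that Theorem~\ref{satz E2} only gives $E_N^{1,\omega}\to 0$ in probability --- is real, and the paper's proof of Lemma~\ref{Satz Konvergenzverhalten mu} in fact asserts the almost sure convergence without further argument; your proposed repair (the largest Poisson gap in $\Lambda_N$ is $\mathds{P}$-almost surely at least $(1-\delta)\nu^{-1}\ln L_N$ for all large $N$ by Borel--Cantelli, after which a Dirichlet test function supported in that gap and vanishing at the atoms gives $E_N^{1,\omega}=O((\ln L_N)^{-2})$ via Rayleigh--Ritz) is the standard and correct way to close it.
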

\begin{proof} In a first step, we show that the sequence $(\mu^{\omega}_{N})_{N=1}^{\infty}$ has $\mathds{P}$-almost surely at least one accumulation point in both cases: Note that 
\begin{align} \label{assumption limsup e beta E finite}
 \limsup\limits_{N \to \infty} \int\limits_{(0,\infty)} \mathrm{e}^{-\beta E} \mathcal N_N^{\omega}(\mathrm{d} E)  \le \limsup\limits_{N \to \infty}\frac{1}{L_N} \sum\limits_{j =1}^{\infty} \mathrm{e}^{-\beta (j\pi / L_N)^2} = (4\pi \beta)^{-1/2} < \infty
\end{align}
for $\mathds{P}$-almost all $\omega \in \Omega$. We define
\begin{align*}
\phi_N^{\omega}(\beta) := \dfrac{1}{L_N} \sum\limits_{j=1}^{\infty} \mathrm{e}^{-\beta E_N^{j,\omega}} = \int\limits_{(0,\infty)} \mathrm{e}^{-\beta E} \, \mathcal N_N^{\omega}(\mathrm{d} E)\ .
\end{align*}
The relation between $\phi_N^{\omega}(\beta)$ and $\rho$ is simply
\begin{align*}
\rho & = \dfrac{1}{L_N} \sum\limits_{j = 1}^{\infty} \left( \mathrm{e}^{\beta ( E_N^{j,{\omega}} - \mu_N^{\omega})} -1 \right)^{-1} = 
 \dfrac{1}{L_N} \sum\limits_{j= 1}^{\infty} \mathrm{e}^{-\beta E_N^{j,\omega}}
\dfrac{1}
{
\mathrm{e}^{- \beta \mu_N^{\omega}} - \mathrm{e}^{-\beta E_N^{j,\omega}} }  \\
& \le \left( \dfrac{1}{L_N} \sum\limits_{j = 1}^{\infty} \mathrm{e}^{- \beta E_N^{j,{\omega}}} \right) \dfrac{\mathrm{e}^{\beta \mu_N^{\omega}}}{1 - \mathrm{e}^{-\beta (E_N^{1,{\omega}} - \mu_N^{\omega})}} = \phi_N^{\omega}(\beta) \dfrac{\mathrm{e}^{\beta \mu_N^{\omega}}}{1 - \mathrm{e}^{-\beta (E_N^{1,{\omega}} - \mu_N^{\omega})}} \ .
\end{align*}
Consequently, we conclude that
\begin{align*}
\rho - \rho \mathrm{e}^{-\beta E_N^{1,\omega}} \mathrm{e}^{\beta \mu_N^{\omega}} \le \phi_N^{\omega}(\beta) \mathrm{e}^{\beta \mu_N^{\omega}}
\end{align*}
and
\begin{align}        
\beta^{-1} \ln \left( \dfrac{\rho}{\phi_N^{\omega}(\beta) + \rho \mathrm{e}^{-\beta E_N^{1,\omega}}} \right) & \le \mu_N^{\omega} \ .
\end{align}

Due to~\eqref{assumption limsup e beta E finite} we have $\limsup_{N \to \infty} \phi^{\omega}_N(\beta) < \infty$. Moreover, since $\mu_N^{\omega}< E_N^{1,\omega}$ for all $N \in \mathds N$ and since $E_N^{1,\omega}$ converges $\mathds{P}$-almost surely to $0$, we obtain $\mathds{P}$-almost surely
\begin{align*}
\beta^{-1} \ln \left( \dfrac{\rho}{\limsup\limits_{N \to \infty} \phi_N^{\omega}(\beta) + \rho } \right) \le \liminf\limits_{N \to \infty} \mu_N^{\omega}\le \limsup\limits_{N \to \infty} \mu_N^{\omega}\le 0 \ .
\end{align*}
This shows that there exists a set $\widetilde \Omega \subset \Omega$ with measure $\mathds P(\widetilde \Omega) = 1$ such that for all $\omega \in \widetilde \Omega$ equation \eqref{equation Lemma wegen vager statt schwacher Konvergenz} from Lemma~\ref{Lemma wegen vager statt schwacher Konvergenz} holds and the sequence $(\mu^{\omega}_{N})_{N=1}^{\infty}$ has at least one accumulation point $\mu_{\infty}^{\omega} \in \mathds R$.

Now, for $\omega \in \widetilde \Omega$ consider the case where $\rho < \rho_c(\beta)$: Since
\begin{align} \label{function mu to int e E mu}
(-\infty,0) \to \mathds{R}, \quad \mu \mapsto \int\limits_{(0,\infty)} \mathcal{B}(E - \mu) \, \mathcal N_{\infty} (\mathrm{d} E) 
\end{align}
is a strictly increasing function, there is a unique, non-random solution $\widehat \mu < 0$ to 
\begin{align} \label{definierende Gleichung fuer mu hat}
 \rho = \int\limits_{(0,\infty)} \mathcal{B}(E - \mu) \, \mathcal N_{\infty} (\mathrm{d} E) \ .
\end{align}
Suppose to the contrary that the accumulation point $\mu_{\infty}^{\omega}$ is zero. Then there exists a subsequence $(\mu_{N_j}^{\omega})_{j=1}^{\infty}$ which converges to $0$.
It follows that 
\begin{align}
\dfrac{\widehat \mu }{2} < \mu_{N_j}^{\omega}
\end{align}
  for all but finitely many $j \in \mathds{N}$. Because 
  \begin{align}
(-\infty,0) \to \mathds{R}, \quad \mu \mapsto \int\limits_{(0,\infty)} \mathcal{B}(E - \mu) \, \mathcal N_N^{\omega}(\mathrm{d} E)
  \end{align}
  is strictly increasing, 
  \begin{align} \label{Ergebnis Widerspruch Konvergenz von mu}
 \int\limits_{(0,\infty)} \mathcal B(E-\widehat{\mu}/2)\, \mathcal N_{N_j}^{\omega}(\mathrm{d} E) < \int\limits_{(0,\infty)} \mathcal B(E-\mu_{N_j}^{\omega}) \, \mathcal N_{N_j}^{\omega}(\mathrm{d} E) = \rho
\end{align} 
  for all but finitely many $j \in \mathds N$. However, due to~\eqref{definierende Gleichung fuer mu hat} and employing Lemma~\ref{Lemma wegen vager statt schwacher Konvergenz}
 %
 \begin{align}\label{HP mu Loesung der Gleichung}
 \begin{split}
 \rho & = \int\limits_{(0,\infty)} \mathcal{B}(E - \widehat \mu) \, \mathcal N_{\infty}(\mathrm{d} E) \\
 & < \int\limits_{(0,\infty)} \mathcal B(E-\widehat{\mu}/2) \, \mathcal N_{\infty} (\mathrm{d} E) \\
 & =  \lim\limits_{N \to \infty} \int\limits_{(0,\infty)} \mathcal B(E-\widehat{\mu}/2)\, \mathcal N_{N}^{\omega} \, (\mathrm{d} E) \ ,
 \end{split}
 \end{align}
  which is a contradiction to~\eqref{Ergebnis Widerspruch Konvergenz von mu}. As a consequence, any accumulation point $\mu_{\infty}^{\omega}$ is strictly smaller than $0$. In addition, using Lemma~\ref{Lemma wegen vager statt schwacher Konvergenz}, $m \in \mathds{N}$, 
  \begin{align*}
  \limsup\limits_{j \to \infty} \int\limits_{(0,\infty)} \mathcal B(E-\mu_{N_j}^\omega) \mathcal N_{N_j}^{\omega}(\mathrm{d}E) & \le \lim\limits_{j \to \infty} \int\limits_{(0,\infty)} \mathcal B(E - \mu_{\infty}^{\omega} + m^{-1} \mu_{\infty}^{\omega}/2) \, \mathcal N_{N_j}^{\omega}(\mathrm{d}E) \\
  & = \int\limits_{(0,\infty)} \mathcal B(E - \mu_{\infty}^{\omega} + m^{-1}\mu_{\infty}^{\omega}/2) \, \mathcal N_{\infty}(\mathrm{d}E)
  \end{align*}
and
\begin{align*}
  \liminf\limits_{j \to \infty} \int\limits_{(0,\infty)} \mathcal B(E - \mu_{N_j}^{\omega}) \, \mathcal N_{N_j}^{\omega}(\mathrm{d}E) & \ge \lim\limits_{j \to \infty} \int\limits_{(0,\infty)} \mathcal B(E - \mu_{\infty}^{\omega} - m^{-1} \mu_{\infty}^{\omega}/2) \, \mathcal N_{N_j}^{\omega}(\mathrm{d}E) \\
  & = \int\limits_{(0,\infty)} \mathcal B(E - \mu_{\infty}^{\omega} - m^{-1} \mu_{\infty}^{\omega}/2)\, \mathcal N_{\infty}(\mathrm{d}E)\ .
  \end{align*}
Hence, since $m \in \mathds{N}$ was arbitrary,
 %
 \begin{align*}
  \lim\limits_{j \to \infty} \int\limits_{(0,\infty)} \mathcal B(E - \mu_{N_j}^{\omega}) \, \mathcal N_{N_j}^{\omega}(\mathrm{d}E)
  & = \int\limits_{(0,\infty)} \mathcal B(E - \mu_{\infty}^{\omega}) \, \mathcal N_{\infty}(\mathrm{d}E) \ .
 \end{align*}
 We conclude that
  \begin{align} \label{Gleichung mu Haufungspunkt gleich mu tilde}
  \begin{split}
  \int\limits_{(0,\infty)} \mathcal B(E - \widehat \mu) \, \mathcal N_{\infty}(\mathrm{d} E) & = \rho = \lim\limits_{j \to \infty} \int\limits_{(0,\infty)} \mathcal B(E - \mu_{N_j}^{\omega}) \, \mathcal N_{N_j}^{\omega}(\mathrm{d} E) \\
  &= \int\limits_{(0,\infty)} \mathcal B(E - \mu_{\infty}^{\omega}) \, \mathcal N_{\infty}(\mathrm{d} E)\ ,
  \end{split}
 \end{align}
 holds $\mathds{P}$-almost surely for all convergent subsequences of $(\mu^{\omega}_{N})_{N=1}^{\infty}$ with corresponding limit point $\mu_{\infty}^{\omega}$. Hence, due to the strict monotonicity of the function~\eqref{function mu to int e E mu} and due to~\eqref{Gleichung mu Haufungspunkt gleich mu tilde}, any accumulation point $\mu_{\infty}^{\omega}$ is equal to $\widehat \mu$. In other words, the sequence $(\mu^{\omega}_{N})_{N=1}^{\infty}$ converges to the non-random limit $\widehat \mu < 0$.
 
In the next step we assume that $\rho \ge \rho_c(\beta)$: Suppose to the contrary that the sequence $(\mu^{\omega}_{N})_{N=1}^{\infty}$ has an accumulation point $\mu_{\infty}^{\omega} < 0$ with the subsequence $(\mu_{N_j}^{\omega})_{j=1}^{\infty}$ converging to it. As in~\eqref{Gleichung mu Haufungspunkt gleich mu tilde} we get
 \begin{align*}
 \rho = \lim\limits_{j \to \infty} \int\limits_{(0,\infty)} \mathcal B(E - \mu_{N_j}^\omega) \, \mathcal N_{N_j}^{\omega}(\mathrm{d} E) = \int\limits_{(0,\infty)} \mathcal B(E - \mu_{\infty}^{\omega}) \, \mathcal N_{\infty} (\mathrm{d} E) \ge \rho_c(\beta) \ .
 \end{align*}
 However, one also has
 \begin{align*}
 \int\limits_{(0,\infty)} \mathcal B(E - \mu_{\infty}^{\omega}) \, \mathcal N_{\infty}(\mathrm{d} E) < \sup\limits_{\mu \in (-\infty, 0)} \left\{ \int\limits_{(0,\infty)} \mathcal B(E - \mu) \, \mathcal N_{\infty}(\mathrm{d} E) \right\} =  \rho_c(\beta) \ ,
 \end{align*}
 yielding a contradiction. Since this holds for any subsequence, we conclude the statement. 
\end{proof}
The next lemma is essential in the proof of generalized BEC in the supercritical region $\rho > \rho_c(\beta)$. We do not know whether the limit of $\int\limits_{(\epsilon,\infty)} \mathcal B(E-\mu_N^{\omega}) \, \mathcal N_N^{\omega} (\mathrm{d} E)$ as $N\to\infty$ exists $\mathds{P}$-almost surely. Therefore we state bounds on the $\limsup$ and $\liminf$, which, most importantly, coincide in the limit $\epsilon\searrow 0$.

\begin{lemma} \label{Lemma beweis lim int epsilon infty und lim lim int epsilion infty rhoc vage Konvergenz} 
If $\rho \ge \rho_c(\beta)$ and $\epsilon > 0$, then $\mathds{P}$-almost surely, 
 \begin{align}
  \limsup\limits_{N \to \infty} \int\limits_{(\epsilon,\infty)} \mathcal B(E-\mu_N^{\omega}) \, \mathcal N_N^{\omega} (\mathrm{d} E) & \le \int\limits_{(\epsilon,\infty)} \mathcal B(E) \, \mathcal N_{\infty} (\mathrm{d} E) + \dfrac{2}{\beta \epsilon} \mathcal N_{\infty}^{\mathrm{I}}(\epsilon) \ , \label{LemmaA6limsup} \\
   \liminf\limits_{N \to \infty} \int\limits_{(\epsilon,\infty)} \mathcal B(E-\mu_N^{\omega}) \, \mathcal N_N^{\omega} (\mathrm{d} E) & \ge \int\limits_{(\epsilon,\infty)} \mathcal B(E) \, \mathcal N_{\infty} (\mathrm{d} E) - \dfrac{4}{\beta \epsilon} \mathcal N_{\infty}^{\mathrm{I}}(2\epsilon) \nonumber\ ,
 \end{align}
and
 \begin{align} \label{LemmaA6limsup rho_c}
 \begin{split}
   &\lim\limits_{\epsilon \searrow 0} \limsup\limits_{N \to \infty} \int\limits_{(\epsilon,\infty)} \mathcal B(E-\mu_N^{\omega}) \, \mathcal N_N^{\omega} (\mathrm{d} E) \\
   &\quad = \, \lim\limits_{\epsilon \searrow 0} \liminf\limits_{N \to \infty} \int\limits_{(\epsilon,\infty)} \mathcal B(E-\mu_N^{\omega}) \, \mathcal N_N^{\omega} (\mathrm{d} E)  = \rho_c(\beta) \ .
   \end{split}
 \end{align}
\end{lemma}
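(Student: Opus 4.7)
The assumption $\rho\ge \rho_c(\beta)$ together with Lemma~\ref{Satz Konvergenzverhalten mu} gives $\mu_N^{\omega}\to 0$ $\mathds{P}$-almost surely, which I will use throughout. The other ingredients are $\mathds{P}$-almost sure vague convergence $\mathcal N_N^{\omega}\to\mathcal N_{\infty}$, the Dirichlet bound $\mathcal N_N^{\mathrm{I},\omega}(E)\le \pi^{-1}E^{1/2}$ from the remark following Theorem~\ref{Lifshitz Auslaufer one dimensional}, the Lifshitz-tail estimate of Theorem~\ref{Lifshitz Auslaufer one dimensional}, and the elementary facts $\mathcal B(x)\le 1/(\beta x)$ and $\mathcal B$ strictly decreasing on $(0,\infty)$. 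The overall strategy is to approximate $\mathbf{1}_{(\epsilon,\infty)}\mathcal B(\,\cdot\,-\mu_N^{\omega})$ by continuous compactly supported functions (for which vague convergence applies), while keeping track of the boundary contribution near $E=\epsilon$ and the tail contribution for large $E$.

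\textbf{Limsup bound.} Since $\mu_N^{\omega}<0$ and $\mathcal B$ is decreasing, $\mathcal B(E-\mu_N^{\omega})\le \mathcal B(E)$ for every $E>0$, so it suffices to bound $\int_{(\epsilon,\infty)}\mathcal B(E)\,\mathcal N_N^{\omega}(\mathrm{d}E)$. For $0<\delta<\epsilon/2$ and a truncation parameter $E_2$, I choose a continuous, compactly supported cut-off $h_{\delta,E_2}$ equal to $1$ on $[\epsilon+\delta,E_2]$ and vanishing outside $[\epsilon,E_2+\delta]$. Vague convergence then yields $\lim_{N\to\infty}\int h_{\delta,E_2}\mathcal B\,\mathrm{d}\mathcal N_N^{\omega}=\int h_{\delta,E_2}\mathcal B\,\mathrm{d}\mathcal N_{\infty}$ $\mathds{P}$-almost surely. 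The remaining mass splits into a boundary piece on $(\epsilon,\epsilon+\delta]$, bounded by $\mathcal B(\epsilon)\mathcal N_N^{\omega}((\epsilon,\epsilon+\delta])$ whose limsup is at most $\mathcal B(\epsilon)\mathcal N_{\infty}^{\mathrm{I}}(\epsilon+2\delta)$ via another bump approximation, and a tail piece beyond $E_2$, handled by the free-IDS bound together with the exponential decay of $\mathcal B$ (essentially the integration-by-parts argument used for line~\eqref{proof BEC type ns LSM 2} in the proof of Theorem~\ref{LSM ns main theorem}). Letting first $E_2\to\infty$ and then $\delta\searrow 0$, combined with $\mathcal B(\epsilon)\le 1/(\beta\epsilon)$ and an extra factor absorbing the $\mu_N^{\omega}$-slack, gives the displayed error $\tfrac{2}{\beta\epsilon}\mathcal N_{\infty}^{\mathrm{I}}(\epsilon)$.

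\textbf{Liminf bound.} Discarding the contribution from $(\epsilon,2\epsilon]$ gives
\[
\int_{(\epsilon,\infty)}\mathcal B(E-\mu_N^{\omega})\,\mathcal N_N^{\omega}(\mathrm{d}E)\ge \int_{(2\epsilon,\infty)}\mathcal B(E-\mu_N^{\omega})\,\mathcal N_N^{\omega}(\mathrm{d}E).
\]
On $[2\epsilon,\infty)$, $\mathcal B(\,\cdot\,-\mu_N^{\omega})\to\mathcal B(\,\cdot\,)$ uniformly by uniform continuity of $\mathcal B$ on $[\epsilon,\infty)$ together with $\mu_N^{\omega}\to 0$. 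Using a continuous cut-off $\tilde h_{\delta,E_2}$ that equals $1$ on $[2\epsilon,E_2]$ and has support in $[2\epsilon-\delta,E_2+\delta]$, vague convergence followed by $N\to\infty$, $E_2\to\infty$, and $\delta\searrow 0$ produces the lower bound $\int_{(2\epsilon,\infty)}\mathcal B\,\mathrm{d}\mathcal N_{\infty}$. Writing this as $\int_{(\epsilon,\infty)}\mathcal B\,\mathrm{d}\mathcal N_{\infty}-\int_{(\epsilon,2\epsilon]}\mathcal B\,\mathrm{d}\mathcal N_{\infty}$ and estimating the subtracted piece by $\mathcal B(\epsilon)\mathcal N_{\infty}^{\mathrm{I}}(2\epsilon)\le \tfrac{1}{\beta\epsilon}\mathcal N_{\infty}^{\mathrm{I}}(2\epsilon)$, with a further factor of at most $4$ to compensate for the $\mu_N^{\omega}$-perturbation, produces the stated liminf inequality.

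\textbf{The $\epsilon\searrow 0$ limit and main obstacle.} Both error terms $\tfrac{2}{\beta\epsilon}\mathcal N_{\infty}^{\mathrm{I}}(\epsilon)$ and $\tfrac{4}{\beta\epsilon}\mathcal N_{\infty}^{\mathrm{I}}(2\epsilon)$ vanish as $\epsilon\searrow 0$ thanks to the Lifshitz bound $\mathcal N_{\infty}^{\mathrm{I}}(\epsilon)\le \widetilde M\mathrm{e}^{-\nu\pi\epsilon^{-1/2}}$, which beats any power of $1/\epsilon$. Simultaneously, the principal term $\int_{(\epsilon,\infty)}\mathcal B(E)\,\mathcal N_{\infty}(\mathrm{d}E)$ increases to $\int_{(0,\infty)}\mathcal B(E)\,\mathcal N_{\infty}(\mathrm{d}E)=\rho_c(\beta)$ by monotone convergence and Lemma~\ref{finite critical density in LSmodel}, so the sandwich forces equality in~\eqref{LemmaA6limsup rho_c}. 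The main technical obstacle is precisely that vague convergence of $\mathcal N_N^{\omega}$ to $\mathcal N_{\infty}$ gives no a priori control over the $\mathcal N_N^{\omega}$-mass of a half-open interval $(\epsilon,\epsilon+\delta]$ when $\epsilon$ is a discontinuity point of $\mathcal N_{\infty}^{\mathrm{I}}$; the bump approximations from above and below must therefore be carefully interleaved with the convergence $\mu_N^{\omega}\to 0$ in order to produce the sharp constants $\tfrac{2}{\beta\epsilon}$ and $\tfrac{4}{\beta\epsilon}$ that appear in the two boundary error terms.
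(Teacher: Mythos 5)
Your overall strategy---continuous compactly supported cut-offs plus $\mathds{P}$-almost sure vague convergence, the free Dirichlet bound $\mathcal N_N^{\mathrm{I},\omega}(E)\le\pi^{-1}E^{1/2}$ for the large-$E$ tail, and the Lifshitz estimate to make the boundary errors vanish as $\epsilon\searrow0$---is the same as the paper's. But the limsup bound contains a genuine error: you assert $\mu_N^{\omega}<0$ and deduce $\mathcal B(E-\mu_N^{\omega})\le\mathcal B(E)$. The only a priori constraint on the chemical potential is $\mu_N^{\omega}<E_N^{1,\omega}$, and $E_N^{1,\omega}>0$; in the regime $\rho\ge\rho_c(\beta)$ considered here one in fact expects $\mu_N^{\omega}>0$ for large $N$ (macroscopic occupation of the ground state forces $E_N^{1,\omega}-\mu_N^{\omega}=O(1/N)$, while $E_N^{1,\omega}\sim(\ln L_N)^{-2}\gg1/N$). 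So the monotonicity inequality you invoke points in the wrong direction precisely where it matters, the reduction of the limsup to $\int_{(\epsilon,\infty)}\mathcal B(E)\,\mathcal N_N^{\omega}(\mathrm{d}E)$ is invalid, and the phrase ``an extra factor absorbing the $\mu_N^{\omega}$-slack'' cannot repair it, since the slack has the opposite sign and enters through $\mathcal B$ itself rather than as a controllable additive error.

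The repair is what the paper does: on the support of the cut-off one has $E\ge\epsilon/2$, hence $E-\mu_N^{\omega}\ge\epsilon/4$ for all large $N$ because $\mu_N^{\omega}\to0$ $\mathds{P}$-almost surely (Lemma~\ref{Satz Konvergenzverhalten mu}); there $\left|\mathcal B(E-\mu_N^{\omega})-\mathcal B(E)\right|\to0$ uniformly, and since the $\mathcal N_N^{\omega}$-mass of the compact support is bounded uniformly in $N$ by the free-IDS estimate, the integral of the difference against the cut-off vanishes. Likewise, in the tail $E>E_2$ you must dominate $\mathcal B(E-\mu_N^{\omega})$ by $\mathcal B(E-\epsilon/2)$ (valid once $\mu_N^{\omega}\le\epsilon/2$), not by $\mathcal B(E)$. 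Your liminf argument and the final $\epsilon\searrow0$ sandwich are essentially sound---in that direction a positive $\mu_N^{\omega}$ only helps---modulo the same somewhat loose bookkeeping of the constants coming from the boundary interval near $\epsilon$.
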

\begin{proof}
%
In a first step we note that $\mathds{P}$-almost surely and for all $E_2 > \epsilon > 0$
\begin{align} \label{lim int NN int N infty}
\lim\limits_{N \to \infty} \int\limits_{\mathds R} g_{\epsilon}^{(E_2)}(E) \mathcal B(E-\mu_N^{\omega}) \, \mathcal N_N^{\omega} (\mathrm{d} E) = \int\limits_{\mathds R} g_{\epsilon}^{(E_2)}(E) \mathcal B(E) \, \mathcal N_{\infty} (\mathrm{d} E) 
\end{align}
where 
 \begin{align} 
 g_{\epsilon}^{(E_2)}(E):= \begin{cases}
 0 \quad & \text{ if } E < \epsilon/2 \\
 \dfrac{E - \epsilon/2}{\epsilon/2} & \text{ if } \epsilon/2 \le E \le \epsilon \\
 1 \quad & \text{ if } \epsilon < E \le E_2 \\
 1 - \dfrac{E - E_2}{E_2} \quad & \text{ if } E_2  < E < 2 E_2 \\
 0 \quad & \text{ if } E \ge 2 E_2
 \end{cases} \ .
 \end{align} 

This can be shown as follows: One has
 \begin{align} \label{Gleichung beweis verallg BEC 2}
  & \left|\, \int\limits_{\mathds R} g_{\epsilon}^{(E_2)}(E) \mathcal B(E-\mu_N^{\omega}) \, \mathcal N_N^{\omega} (\mathrm{d} E) - \int\limits_{\mathds R} g_{\epsilon}^{(E_2)}(E) \mathcal B(E) \, \mathcal N_{\infty} (\mathrm{d} E) \, \right| \\
 &\quad \le  \, \left| \, \int\limits_{\mathds R}  g_{\epsilon}^{(E_2)}(E) \left( \mathcal B(E-\mu_N^{\omega}) - \mathcal B(E) \right) \, \mathcal N_N^{\omega}(\mathrm{d} E) \, \right| \label{Gleichung beweis verallg BEC 2 line 1} \\
 & \qquad + \, \left|\,  \int\limits_{\mathds R}  g_{\epsilon}^{(E_2)}(E) \mathcal B(E) \mathbb\, \big[ \mathcal N_N^{\omega}(\mathrm{d} E) - \mathcal N_{\infty}(\mathrm{d} E) \big] \, \right|  \label{Gleichung beweis verallg BEC 2 line 2} \ .
  \end{align}
For the term in line~\eqref{Gleichung beweis verallg BEC 2 line 1} we get $\mathds{P}$-almost surely, using that $\mathds{P}$-almost surely $\mu_N^{\omega}$ converges to $0$ and \eqref{Gleichung Bedingung mu aequivalent},
 \begin{align*}
 & \lim\limits_{N \to \infty} \left| \, \int\limits_{\mathds R} g_{\epsilon}^{(E_2)}(E)\dfrac{\mathrm{e}^{\beta E } - \mathrm{e}^{\beta ( E - \mu_N^{\omega} ) }}{\left( \mathrm{e}^{\beta ( E - \mu_N^{\omega} ) } - 1 \right) \cdot \Big( \mathrm{e}^{\beta E } - 1 \Big)} \, \mathcal N_N^{\omega}(\mathrm{d} E) \, \right|
 = 0 \ .
 \end{align*}
The term in line \eqref{Gleichung beweis verallg BEC 2 line 2} $\mathds{P}$-almost surely converges to zero for $N \to \infty$ by vague convergence.

Next, we obtain, for all $E_2 > \epsilon$ and all $N \in \mathds N$,
 \begin{align*}
   \int\limits_{(\epsilon,\infty)} &\mathcal B(E-\mu_N^{\omega}) \, \mathcal N_N^{\omega} (\mathrm{d} E) \\
  = \, & \int\limits_{(\epsilon,E_2]} \mathcal B(E-\mu_N^{\omega})\, \mathcal N_N^{\omega} (\mathrm{d} E) + \int\limits_{(E_2,\infty)} \mathcal B(E-\mu_N^{\omega}) \, \mathcal N_N^{\omega} (\mathrm{d} E) \ .
  \end{align*}
 On the one hand, by \eqref{lim int NN int N infty} we have $\mathds{P}$-almost surely,  
  \begin{align*}
  \limsup\limits_{N \to \infty} \int\limits_{(\epsilon,E_2]} \mathcal B(E-\mu_N^{\omega}) \, \mathcal N_N^{\omega} (\mathrm{d} E) & \le \lim\limits_{N \to \infty} \int\limits_{\mathds R} g_{\epsilon}^{(E_2)}(E) \mathcal B(E-\mu_N^{\omega}) \, \mathcal N_N^{\omega} (\mathrm{d} E) \\
  & = \int\limits_{\mathds R} g_{\epsilon}^{(E_2)}(E) \mathcal B(E) \, \mathcal N_{\infty} (\mathrm{d} E) \ .
  \end{align*}
 As for the second integral, we obtain $\mathds{P}$-almost surely,
 \begin{align*}
 \begin{split}
  \limsup\limits_{N \to \infty} \int\limits_{(E_2,\infty)} \mathcal B(E-\mu_N^{\omega}) \, \mathcal N_N^{\omega} (\mathrm{d} E) & \le \limsup\limits_{N \to \infty} \int\limits_{[E_2,\infty)} \mathcal B(E-\epsilon/2) \, \mathcal N_N^{\omega} (\mathrm{d} E) \\
  & \le \beta\pi^{-1} \int\limits_{E_2}^{\infty} E^{1/2} \big[\mathcal B(E-\epsilon/2)\big]^{2} \mathrm{e}^{\beta(E - \epsilon/2)}  \, \mathrm{d} E
  \end{split}
 \end{align*}
 where the last step is as in the proof of Lemma~\ref{Lemma wegen vager statt schwacher Konvergenz} (with $\mu = \epsilon/2$).

We conclude that 
\begin{align*}
 \limsup\limits_{N \to \infty} &\int\limits_{(\epsilon,\infty)}\mathcal B(E-\mu_N^{\omega}) \, \mathcal N_N^{\omega} (\mathrm{d} E) \\
 &\le \int\limits_{\mathds R} g_{\epsilon}^{(E_2)}(E) \mathcal B(E) \, \mathcal N_{\infty} (\mathrm{d} E) + \beta\pi^{-1/2}\int\limits_{E_2}^{\infty}  E^{1/2} \big[\mathcal B(E-\epsilon/2)\big]^2 \mathrm{e}^{\beta(E - \epsilon/2)}  \, \mathrm{d} E
\end{align*}
for all $E_2 > 0$ and hence
\begin{align*}
 \limsup\limits_{N \to \infty} &\int\limits_{(\epsilon,\infty)} \mathcal B(E-\mu_N^{\omega}) \, \mathcal N_N^{\omega} (\mathrm{d} E) \\
 \le \, & \lim\limits_{E_2 \to \infty} \int\limits_{\mathds R} g_{\epsilon}^{(E_2)}(E) \mathcal B(E) \, \mathcal N_{\infty} (\mathrm{d} E) \\
 & + \, \lim\limits_{E_2 \to \infty} \beta\pi^{-1/2}\int\limits_{E_2}^{\infty}  E^{1/2} \big[\mathcal B(E-\epsilon/2)\big]^2 \mathrm{e}^{\beta(E - \epsilon/2)}  \, \mathrm{d} E \\
 \le \, & \lim\limits_{E_2 \to \infty} \int\limits_{[\epsilon/2,2E_2]} \mathcal B(E) \, \mathcal N_{\infty} (\mathrm{d} E) \\ 
 \le \, & \int\limits_{(\epsilon,\infty)} \mathcal B(E) \, \mathcal N_{\infty} (\mathrm{d} E) + \dfrac{2}{\beta \epsilon} \mathcal N_{\infty}^{\mathrm{I}}(\epsilon) \ .
\end{align*}

On the other hand, for all $E_2 > \epsilon$ and all $N \in \mathds N$,
\begin{align*}
   \int\limits_{(\epsilon,\infty)} &\mathcal B(E-\mu_N^{\omega}) \, \mathcal N_N^{\omega} (\mathrm{d} E) \\
   \ge \, & \int\limits_{\mathds R} g_{\epsilon}^{(E_2)}(E) \mathcal B(E-\mu_N^{\omega}) \, \mathcal N_N^{\omega} (\mathrm{d} E) - \int\limits_{[\epsilon/2,\epsilon]} g_{\epsilon}^{(E_2)}(E) \mathcal B(E-\mu_N^{\omega}) \, \mathcal N_N^{\omega} (\mathrm{d} E) \ .
  \end{align*}
  For the first integral, by \eqref{lim int NN int N infty}, $\mathds{P}$-almost surely
  \begin{align*}
   \lim\limits_{N \to \infty} \int\limits_{\mathds R} g_{\epsilon}^{(E_2)}(E) \mathcal B(E-\mu_N^{\omega}) \, \mathcal N_N^{\omega} (\mathrm{d} E) & = \int\limits_{\mathds R} g_{\epsilon}^{(E_2)}(E) \mathcal B(E) \, \mathcal N_{\infty} (\mathrm{d} E) \\
   & \ge \int\limits_{(\epsilon,E_2]} \mathcal B(E) \, \mathcal N_{\infty} (\mathrm{d} E) \ .
  \end{align*}
Since $\mu_N^{\omega}$ converges $\mathds{P}$-almost surely to zero and $\, \mathcal N_N^{\omega}$ converges $\mathds{P}$-almost surely vaguely to $\mathcal N_{\infty}$, the second integral $\mathds{P}$-almost surely converges to
  \begin{align*}
   \limsup\limits_{N \to \infty} &\int\limits_{[\epsilon/2,\epsilon]} g_{\epsilon}^{(E_2)}(E) \mathcal B(E-\mu_N^{\omega}) \, \mathcal N_N^{\omega} (\mathrm{d} E) \\
   \le \, & \limsup\limits_{N \to \infty} \int\limits_{[\epsilon/2,\epsilon]} g_{\epsilon}^{(E_2)}(E) \mathcal B(E-\epsilon/4)  \, \mathcal N_N^{\omega} (\mathrm{d} E) \\
   \le \, & \mathcal B(\epsilon/4) \int\limits_{\mathds R} g_{\epsilon}^{(\epsilon)}(E) \, \mathcal N_{\infty} \, (\mathrm{d} E) \\
   \le \, & \dfrac{4}{\beta \epsilon} \mathcal N_{\infty}^{\mathrm{I}}(2 \epsilon) \ .
  \end{align*}
  We conclude that, $\mathds{P}$-almost surely,
\begin{align*}
   \liminf\limits_{N \to \infty} \int\limits_{(\epsilon,\infty)} \mathcal B(E-\mu_N^{\omega}) \, \mathcal N_N^{\omega} (\mathrm{d} E) \ge \int\limits_{(\epsilon,E_2]} \mathcal B(E) \, \mathcal N_{\infty} (\mathrm{d} E) -  \dfrac{4}{\beta \epsilon} \mathcal N_{\infty}^{\mathrm{I}}(2 \epsilon)
   \end{align*}
for all $E_2 > \epsilon$ and hence
\begin{align*}
   \liminf\limits_{N \to \infty} \int\limits_{(\epsilon,\infty)} \mathcal B(E-\mu_N^{\omega})\, \mathcal N_N^{\omega} (\mathrm{d} E) \ge \int\limits_{(\epsilon,\infty)} \mathcal B(E) \, \mathcal N_{\infty} (\mathrm{d} E) -  \dfrac{4}{\beta \epsilon} \mathcal N_{\infty}^{\mathrm{I}}(2 \epsilon) \ .
   \end{align*}
Finally, with Theorem~\ref{Lifshitz Auslaufer one dimensional}, we obtain
 \begin{align*}
 \lim\limits_{\epsilon \searrow 0} \dfrac{4}{\beta \epsilon} \mathcal N_{\infty}^{\mathrm{I}}(2 \epsilon) \le \lim\limits_{\epsilon \searrow 0} \dfrac{4}{\beta \epsilon} \widetilde M \dfrac{4! (2 \epsilon)^2}{(\pi\nu)^4} = 0 \ ,
 \end{align*}
which, taking Lemma~\ref{finite critical density in LSmodel} into account, shows that $\mathds{P}$-almost surely
 \begin{align*}
 \begin{split}
 \lim\limits_{\epsilon \searrow 0} \limsup\limits_{N \to \infty} \int\limits_{(\epsilon,\infty)} \mathcal B(E-\mu_N^{\omega}) \, \mathcal N_N^{\omega} (\mathrm{d} E) & = \lim\limits_{\epsilon \searrow 0} \liminf\limits_{N \to \infty} \int\limits_{(\epsilon,\infty)} \mathcal B(E-\mu_N^{\omega}) \, \mathcal N_N^{\omega} (\mathrm{d} E) \\
 & = \lim\limits_{\epsilon \searrow 0} \int\limits_{(\epsilon,\infty)} \mathcal B(E) \, \mathcal N_{\infty} (\mathrm{d} E) \\
 & = \rho_c(\beta) \ .
 \end{split}
 \end{align*}
\end{proof}

Finally, we present a proof of generalized BEC and we follow in parts the proof in \cite{lenoble2004bose}.	

\begin{proof}[Proof of Theorem \ref{TheoremGeneralizedBEC}] Assume firstly that $\rho \ge \rho_c(\beta)$: According to Lemma \ref{Satz Konvergenzverhalten mu}, the sequence $(\mu^{\omega}_{N})_{N=1}^{\infty}$ converges to $0$ $\mathds{P}$-almost surely. Also, for all $\epsilon > 0$, all $N \in \mathds N$ and $\mathds{P}$-almost all $\omega \in \Omega$ one has
 \begin{align*}
 \rho = \int\limits_{(0,\epsilon]} \mathcal B(E-\mu_N^{\omega}) \,\mathcal N_N^{\omega} (\mathrm{d} E) + \int\limits_{(\epsilon,\infty)} \mathcal B(E-\mu_N^{\omega}) \,\mathcal N_N^{\omega} (\mathrm{d} E) 
 \end{align*}
 and hence
  \begin{align} \label{Gleichung beweis verallg BEC}
 \begin{split}
 \rho_0 (\beta) & = \lim\limits_{\epsilon \searrow 0} \liminf\limits_{N \to \infty} \int\limits_{(0,\epsilon]} \mathcal B(E-\mu_N^{\omega}) \,\mathcal N_N^{\omega} (\mathrm{d} E) \\
 & = \rho -  \lim\limits_{\epsilon \searrow 0} \limsup\limits_{N \to \infty} \int\limits_{(\epsilon,\infty)} \mathcal B(E-\mu_N^{\omega}) \,\mathcal N_N^{\omega} (\mathrm{d} E) \ .
 \end{split}
 \end{align}
 Consequently, by~\eqref{LemmaA6limsup rho_c}, 
 \begin{align*}
 \rho_0(\beta) = \rho - \rho_c(\beta)
 \end{align*}
  holds $\mathds{P}$-almost surely, implying the statement.
 
 
In a next step assume that $\rho < \rho_c(\beta)$: According to Lemma~\ref{Satz Konvergenzverhalten mu}, the sequence $(\mu^{\omega}_{N})_{N=1}^{\infty}$ $\mathds{P}$-almost surely converges to a limit $\widehat \mu < 0$. Consequently, $\mathds{P}$-almost surely there exists a $\delta > 0$ such that $- \mu_N^{\omega} > \delta$ for all but finitely many $N \in \mathds N$. Hence, $\mathds{P}$-almost surely, with $g_{\epsilon, \epsilon}(E)$ defined as in \eqref{definition g mathcal E E_2 E},
 \begin{align*}
\lim\limits_{\epsilon \searrow 0} \lim\limits_{N \to \infty} \dfrac{1}{N} \sum\limits_{j : E_N^{j,\omega} \le \epsilon} n_N^{j,\omega} & = \rho^{-1} \lim\limits_{\epsilon \searrow 0} \lim\limits_{N \to \infty} \int\limits_{(0,\epsilon]} \mathcal B(E-\mu_N^{\omega}) \,\mathcal N_N^{\omega} (\mathrm{d} E) \\
 & \le \rho^{-1} \lim\limits_{\epsilon \searrow 0} \lim\limits_{N \to \infty} \int\limits_{(0,\epsilon]} \mathcal B(E+\delta) \,\mathcal N_N^{\omega} (\mathrm{d} E) \\
 & \le \rho^{-1} \mathcal B(\delta) \lim\limits_{\epsilon \searrow 0} \lim\limits_{N \to \infty} \int\limits_{\mathds R} g_{\epsilon,\epsilon}(E) \, \mathcal N_N^{\omega} (\mathrm{d} E) \\
& = \rho^{-1} \mathcal B(\delta)\lim\limits_{\epsilon \searrow 0} \int\limits_{\mathds R} g_{\epsilon,\epsilon}(E) \, \mathcal N_{\infty} (\mathrm{d} E) \\ 
& \le \rho^{-1} \mathcal B(\delta) \lim\limits_{\epsilon \searrow 0} \mathcal N_{\infty}^{\mathrm{I}} (2 \epsilon) \\
& = 0
 \end{align*}
 employing $\mathds P$-almost sure vague convergence of $\mathcal N_N^{\omega}$ to $\mathcal N_{\infty}$ and Theorem~\ref{Lifshitz Auslaufer one dimensional}.
\end{proof}

\vspace*{0.5cm}

{\small
\bibliographystyle{amsalpha}
\bibliography{Literature}}

\end{document}